\newcommand\ket[1]{\ensuremath{|#1\rangle}} %
\newcommand\bra[1]{\ensuremath{\langle#1|}} %
\newcommand\tr{\mathop{\rm tr}\nolimits} %
\newcommand\class[1]{{\bf #1}} %
\newcommand\problem[1]{{\tt #1}} %
\newcommand\problem*[1]{\problem{#1}$^*$} %
\newcommand\yes{\textsf{yes}} %
\newcommand\no{\textsf{no}} %
\newcommand\true{\texttt{true}\xspace} %
\newcommand\false{\texttt{false}\xspace} %
\def\1{I} %
\def\A{\mathcal{A}} %
\def\B{\mathcal{B}} %
\def\H{\mathcal{H}} %
\def\I{\mathcal{I}} %
\def\P{\mathcal{P}} %
\newtheorem{lemma}{Lemma} %
\newtheorem{theorem}{Theorem} %
\begin{document}


\title{\bf\Large Binary Constraint System Games and Locally
  Commutative Reductions}

\author{Zhengfeng Ji\\[.2em]
  \it\normalsize Institute for Quantum Computing and School of Computer Science,\\
  \it\normalsize University of Waterloo, Waterloo, Ontario, Canada\\[.1em]
  \it\normalsize State Key Laboratory of Computer Science, Institute of Software,\\
  \it\normalsize Chinese Academy of Sciences, Beijing, China}

\date{}

\maketitle

\begin{abstract}
  A binary constraint system game is a two-player one-round non-local
  game defined by a system of Boolean constraints. The game has a
  perfect quantum strategy if and only if the constraint system has a
  quantum satisfying assignment~\cite{CM12}. We show that several
  concepts including the quantum chromatic number and the
  Kochen-Specker sets that arose from different contexts fit naturally
  in the binary constraint system framework. The structure and
  complexity of the quantum satisfiability problems for these
  constraint systems are investigated. Combined with a new construct
  called the commutativity gadget for each problem, several classic
  \class{NP}-hardness reductions are lifted to their corresponding
  quantum versions. We also provide a simple parity constraint game
  that requires $\Omega(\sqrt{n})$ EPR pairs in perfect strategies
  where $n$ is the number of variables in the constraint system.
\end{abstract}

\section{Introduction}\label{sec:intro}

Satisfiability, the problem of deciding if there is a truth assignment
to a given Boolean formula, is arguably one of the most important
problems in computer science. It is the first known example of an
\class{NP}-complete problem~\cite{Coo71}. The problem itself and some
of its variants such as \problem{3-SAT} and \problem{NAE-SAT} are the
usual starting points of \class{NP}-hardness proofs (see
e.g.~\cite{Pap93}).

A quantum version of the satisfiability problem arose recently from
the study of quantum two-player one-round games~\cite{CM12}. It is a
concept that has roots in both the classical complexity
theory~\cite{BFL91,BGKW88,FL92} and the study of entanglement and
non-locality of quantum systems~\cite{Bel64,KS67,Tsi87,Mer90}. The
topic of this paper is to explore the structure and relations of
problems under this new definition of satisfiability. As in the
classical theory, polynomial-time reductions will be the major
methodology employed in our investigations.

Two-player one-round games have rich applications in classical
complexity theory~\cite{BGKW88,BFL91,FL92,Raz98,Has01}. A two-player
one-round game consists of two cooperating players Alice, Bob and a
referee. Alice and Bob can agree on a strategy beforehand but cannot
communicate after the game starts. The referee randomly chooses a pair
of questions $(s, t)$ for Alice and Bob according to some fixed
probability distribution $p$ over $S\times T$, and then sends $s$ to
Alice and $t$ to Bob. Alice and Bob are required to reply with answers
$a, b$ respectively to the referee from the finite answer sets $A, B$.
Finally, the referee determines if they win the game according to a
predicate $V:A\times B\times S\times T\rightarrow \{0, 1\}$. Let the
classical value of the game be the optimal winning probability for
Alice and Bob.

In the quantum case, the players Alice and Bob are allowed to share a
quantum state before the game starts and can perform quantum
measurements to obtain the answers. Such two-player games are
sometimes called non-local games~\cite{CHTW04} in the literature. Let
the shared state be $\ket{\psi}$ and let Alice and Bob's measurements
be $\{ A_s^a \}$ and $\{ B_t^b \}$ respectively depending on the
questions $s, t$ they receive. The optimal winning probability for
Alice and Bob, also known as the non-local value of the game, is
therefore
\begin{equation}\label{eq:value}
  \sup\, \biggl\{ \sum_{a,b,s,t} p(s,t) V(a,b \mid s,t)
  \bra{\psi} A_s^a \otimes B_t^b \ket{\psi} \biggr\},
\end{equation}
where the supremum is taken over all possible Hilbert spaces $\H_A$
and $\H_B$, all quantum state $\ket{\psi} \in \H_A\otimes \H_B$, and
all positive-operator valued measures (POVMs) $\{A_s^a\}$ and
$\{B_t^b\}$ such that $\sum_aA_s^a =\1$, $\sum_bB_t^b=\1$. Notions
such as quantum states and measurements used in this paper are
standard in quantum information theory and we refer readers not
familiar with them to~\cite{NC00}. The non-local value of a game may
be strictly larger than its classical value. For instance, the CHSH
game, a simple example that recasts the CHSH inequality~\cite{CHSH69}
in the non-local game framework, has non-local value approximately
$0.85$ and classical value $0.75$. The investigation of the properties
of non-local games and the complexity of determining the non-local
value is an important research
topic~\cite{CHTW04,CSUU08,KRT10,KV11,IV12,Vid13}.

Binary constraint system games constitute a special class of non-local
games defined in~\cite{CM12}. It relates a binary constraint system to
a non-local game in a natural way. A binary constraint system (BCS) is
a collection of Boolean constraints $C_1, C_2, \ldots, C_m$ over
binary variables $x_1, x_2, \ldots, x_n \in \{0, 1\}$. For example,
Eq.~\eqref{eq:magic-square} below defines a BCS of six constraints. A
BCS is classically satisfiable if there exists a truth assignment to
the variables that satisfies all constraints. A BCS defines a
non-local game in the following way. The referee picks a constraint
$C_s$ and a variable $x_t$ occurring in $C_s$ uniformly at random,
sends $s,t$ to Alice and Bob. Alice is required to give binary
assignments to each of the variables in $C_s$ and Bob is required to
give a binary assignment to $x_t$. They win the game if Alice's
assignment does satisfy the constraint $C_s$ and Bob's assignment to
$x_t$ is equal to Alice's assignment to the corresponding variable. We
call the non-local game thus defined a BCS game. A parity BCS is a
constraint system whose constraints are of the form
\begin{equation*}
  x_{i_1} \oplus x_{i_2} \oplus \cdots \oplus x_{i_k} = 0 \text{ (or }
  1\text{)}.
\end{equation*}
The corresponding game will be called a parity BCS game.

The magic square game~\cite{Mer90,Per90,Mer93,Ara04} is the
paradigmatic example of a parity BCS game. In this game, we have nine
variables $x_1, x_2, \ldots, x_9 \in \{0, 1\}$ and the following six
constraints
\begin{equation}\label{eq:magic-square}
  \begin{split}
    x_1 \oplus x_2 \oplus x_3 = 0, \qquad x_1 \oplus x_4 \oplus x_7 = 0,\\
    x_4 \oplus x_5 \oplus x_6 = 0, \qquad x_2 \oplus x_5 \oplus x_8 = 0,\\
    x_7 \oplus x_8 \oplus x_9 = 0, \qquad x_3 \oplus x_6 \oplus x_9 =
    1.
  \end{split}
\end{equation}
It gets the name because we can arrange the variables in the following
$3$ by $3$ square in Fig.~\ref{fig:magic-square}
\begin{figure}[htbp]
  \centering
  \includegraphics{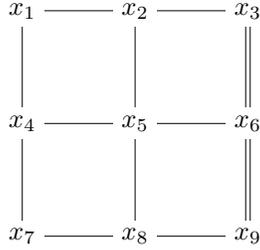}
  \caption{Magic square game}
  \label{fig:magic-square}
\end{figure}
and the constraints are that the rows have even parity, the first two
columns have even parity and the last column has odd parity. In the
figure, we use single and double lines to represent even and odd
parity constraints respectively.

It is easy to see that the BCS of the magic square game in
Eq.~\eqref{eq:magic-square} is unsatisfiable classically.
Correspondingly, if Alice and Bob cannot share entanglement and use
quantum strategies, they will not be able to win the magic square game
with probability $1$ without communication. Surprisingly, however,
they can win the game perfectly with the help of entanglement in the
non-local game setting~\cite{Mer90,Mer93}. In fact, if we define a
shared EPR pair as the two-qubit state $(\ket{00}+\ket{11}) /
\sqrt{2}$ shared between Alice and Bob, then two shared EPR pairs
suffice to win this particular game perfectly. We will call a strategy
that wins a game with probability $1$ a perfect strategy. Another
interesting BCS game that has a perfect strategy is the so-called
magic pentagram game (see~\cite{Mer93,CM12,Ark12} for further
discussions on this game). Generally, a magic BCS game is a BCS game
that has a perfect quantum strategy but no classical strategy.

Recently, R. Cleve and R. Mittal gave a characterization of BCS games
that have perfect strategies~\cite{CM12}. This characterization
establishes the concept of quantum satisfiability for binary
constraint systems. It relates the BCS game that has a perfect
strategy on one hand and the BCS that has a quantum satisfying
assignment on the other. Because of the importance of the definition
to this work, we will discuss it in detail in
Section~\ref{sec:quantum_assignment}.

Currently, we do not know much about BCS games. For example, it is not
known if it is decidable to determine the existence of a perfect
strategy and it is not known whether there is an upper bound on the
amount of entanglement needed in perfect strategies. One exception is
that, for parity BCS games where each variable occurs at most twice in
all the constraints, a beautiful criterion that decides if a perfect
strategy exists was found in~\cite{Ark12}. This was done by
considering the graph planarity of the so-called intersection graph of
the game and an application of the Pontryagin-Kuratowski theorem in
graph theory. It is further shown in that paper that when a perfect
strategy exists, three shared EPR pairs suffice in all parity BCS
games of this kind.

Our contribution to this subject includes three different aspects.
First, we observe that many previously known concepts such as the
quantum chromatic number~\cite{CMN+07} and the Kochen-Specker
sets~\cite{KS67} can actually be described in the BCS framework, thus
enriching the scope of this subject. Second, we initiated the study of
reductions between binary constraint systems that preserves quantum
satisfiability. These reductions are usually built up on their
classical counterpart, but require non-trivial modifications by a new
construct called commutativity gadget. For example we will be able to
reduce the quantum analog of \problem{3-SAT} to that of
\problem{3-COLORING}. The non-commutative Gr\"{o}bner basis
computation is employed to assist our design of commutative gadgets.
Third, an example of a parity BCS game is given where a large amount
of entanglement is necessary for perfect strategies to exist.
Moreover, we can have such games where each variable occurs in at most
three different constraints. This example is interesting when compared
with the results obtained in~\cite{Ark12}. The core of the
construction is the Clifford algebra and its representation theory,
which have already found applications in the study of non-local
games~\cite{Tsi87,Slo11}. Our result is yet another use of the
Clifford algebra in a different way.

\section{Quantum Satisfying Assignments}

\label{sec:quantum_assignment}

In this section, we review the definition of a quantum satisfying
assignment of a binary constraint system introduced in~\cite{CM12}.
Before stating the definition, we need to first rewrite the Boolean
constraints as polynomial constraints. For parity games, it is
convenient to work with the $\{\pm 1\}$ domain so that exclusive OR
operation can be replace by multiplication, while for most of the
other cases, the $\{0, 1\}$ domain turns out to be more convenient.
The choice of domain actually does not make much difference in the
problem.

For example, the magic square BCS in Eq.~\eqref{eq:magic-square} can
be written as
\begin{equation}
  \label{eq:magic-square-prod}
  \begin{split}
    & x_1 x_2 x_3 = 1, \qquad x_1 x_4 x_7 = 1,\\
    & x_4 x_5 x_6 = 1, \qquad x_2 x_5 x_8 = 1,\\
    & x_7 x_8 x_9 = 1, \qquad x_3 x_6 x_9 =-1,\\
    & x_j^2=1 \text{ for } j=1,2,\ldots,9.
  \end{split}
\end{equation}
One can find more examples of BCS in Section~\ref{sec:examples}.

A BCS has a {\it quantum satisfying assignment\/} or an {\it operator
  assignment\/}, if there exists a {\it finite dimensional\/} Hilbert
space $\H$, and an assignment of a self-adjoint linear operator $X_j
\in \B(\H)$ to each of its variables $x_j$ such that the following
conditions hold
\begin{itemize}
\item[(a)] The operators satisfy each polynomial constraint when we
  substitute $x_j$'s with operators $X_j$'s respectively and $1$ with
  $\1$,
\item[(b)] For each $j$, the spectrum of $X_j$ is contained in $\{ 0,
  1\}$ ($\{ \pm 1 \}$), or equivalently, $X_j$ is self-adjoint and
  $X_j^2=X_j$ ($X_j^2=\1$),
\item[(c)] Each pair of operators $X_j$, $X_k$ that appear in the same
  constraint is commuting, $X_jX_k=X_kX_j$.
\end{itemize}
Conditions (a) and (b) naturally follow from the defining constraints
of the BCS. Condition (c) is implicit in the classical case, and is
made explicit in the quantum case so that measurements used on Alice's
side to determine assignments to the variables are compatible. Another
nice property guaranteed by (c) is that each polynomial constraint
$C_j$ will evaluate to self-adjoint operator if the variables are all
self-adjoint. We will refer to this condition as the local
commutativity condition of a BCS in the following.

For the magic square constraint system, the nine operators in
Fig.~\ref{fig:magic-square-solution} constitute a quantum satisfying
assignment on a $4$-dimensional Hilbert space which satisfies all
constraints in Eq.~\eqref{eq:magic-square-prod}. In the figure, $I$ is
the identity matrix of size $2$ and $X,Y,Z$ are Pauli matrices
\begin{equation*}
  X =
  \begin{pmatrix}
    0 & 1\\
    1 & 0
  \end{pmatrix}, Y =
  \begin{pmatrix}
    0 & -i\\
    i & \phantom{-}0
  \end{pmatrix}, Z =
  \begin{pmatrix}
    1 & \phantom{-}0\\
    0 & -1
  \end{pmatrix}.
\end{equation*}

\begin{figure}[htbp]
  \centering
  \includegraphics{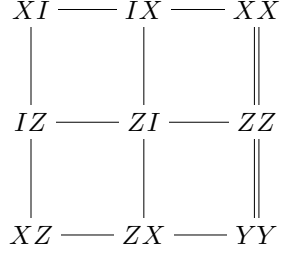}
  \caption{Quantum satisfying assignment for the magic square game}
  \label{fig:magic-square-solution}
\end{figure}

One of the main results in~\cite{CM12} states that a BCS game has a
perfect strategy if and only if the corresponding BCS has a quantum
satisfying assignment. In the original statement of the theorem, it is
required that the perfect strategy uses finite or countably infinite
dimensional entanglement. We note, however, that as long as a perfect
strategy exists in an arbitrary Hilbert space $\H$, the state
$\ket{\psi} = \sum \alpha_j\ket{\phi_j}\ket{\psi_j}$ used in the
strategy always has at most countably infinite many nonzero amplitudes
$\alpha_j$~\cite[Subsection 2.2]{KR97}. Therefore, the theorem still
holds without the dimension assumption of the shared state. During the
proof of this result, it was also made clear that one could always use
maximally entangled states (states of the form $\sum_j\ket{j,j}$) in a
perfect strategy for BCS games. This is a property of BCS games that
does not hold in general non-local games~\cite{Reg11}.

A BCS problem, or $*$-problem, is the decision problem that asks
whether a quantum satisfying assignment exists for a given encoding of
BCS as input. For particular constraint systems, we will denote the
corresponding BCS problem by adding a star to the name of the
corresponding classical problem. For example, \problem*{3-SAT} denotes
the BCS problem defined by \problem{3-SAT} (the satisfiability problem
where each constraint is the disjunction of at most three literals)
instances. It is important to clarify that the quantum analogs of
satisfiability problems considered here are different from the local
Hamiltonian related problems~\cite{KSV02,AN02,Bra06,GN13} in
Hamiltonian complexity theory including, for example, the
\problem{Quantum-3-SAT} problem recently shown to be
\class{QMA$\bf_1$}-complete~\cite{GN13}.

One may also consider the decision problem that asks whether a BCS
game has non-local value $1$. It is not known if this problem is the
same as the BCS problem defined above. The main problem is that, by
the definition in Eq.~\eqref{eq:value}, the non-local value could be
$1$ for games without perfect strategy but have a sequence of
non-perfect strategies whose values converge to $1$.

The transformation of the existence of a perfect strategy to that of
an operator assignment of a BCS~\cite{CM12} opens the door to studying
the problem from an algebraic point of view. The main problem is to
tell if a set of ``non-commutative'' polynomials has self-adjoint
operator solutions, and can be considered in some sense as the matrix
version of the problem of deciding whether a set of polynomials has
real solutions.

We give the definitions of several algebraic concepts to more formally
convey the idea. Let $F$ denote the free semigroup of $n$ generators
$x_1, x_2, \ldots, x_n$, and let $\A$ be its semigroup algebra
$\mathbb{C}[F]$. Elements of $F$ will be called words, and the empty
word is denoted by $1$. Each element $p$ of algebra $\A$ is a formal
finite sum of words in $F$, $p = \sum \alpha_w w$ where $\alpha_w \in
\mathbb{C}$. In the following, elements of $\A$ will be referred to as
non-commutative polynomials or simply polynomials in short. Equip $\A$
with an involution~$*$ defined as
\begin{equation*}
  w^* = x_{j_k} x_{j_{k-1}} \cdots x_{j_1} \text{ for }
  w = x_{j_1}x_{j_2}\cdots x_{j_k} \in F,
\end{equation*}
and $p^* = \sum \overline{\alpha_w} w^*$ for $p = \sum \alpha_w w \in
\A$ where $\overline{\alpha_w}$ is the complex conjugate of
$\alpha_w$. A polynomial $p \in \A$ is called self-adjoint if $p^*=p$.
For example, $x_1x_2-x_2x_1$ is not self-adjoint but
$i(x_1x_2-x_2x_1)$ is where $i=\sqrt{-1}$. Denote the space of all
self-adjoint polynomials as $\A_S$. One can evaluate a polynomial $p$
at the tuple of self-adjoint bounded linear operators $X=(X_1, X_2,
\ldots, X_n)$, written as $p(X)$, as the image of the $*$-algebra
homomorphism that maps $x_j$ to $X_j$ for each $j$. It is easy to see
if $p$ and the $X_j$'s are all self-adjoint, so is $p(X)$.

Any BCS of variables $x_1, x_2, \ldots, x_n$ and constraints $C_1,
C_2, \ldots, C_m$ can be equivalently described by a set $\P$ of
self-adjoint polynomials, which includes the three types of
polynomials of the form $C_j + C_j^*$, $1-x_j^2$ and also
$i(x_kx_j-x_jx_k)$ where $x_j, x_k$ occur in the same constraint, each
of which corresponds to one of the three conditions (a), (b) and (c)
for an operator assignment of the BCS.

Let $\B$ be a constraint system defined by a set $\P$ of self-adjoint
non-commutative polynomials. Denote the ideal $\I$ of $\A$ generated
by $\P$ as the ideal of $\B$, and the quotient $\A/\I$ as the algebra
of $\B$. The algebra of a constraint system is associative, but
usually non-commutative and is hard to analyze. In the analysis of
locally commutative reductions, we usually need to know if the
commutator of two generators is in the ideal $\I$. Non-commutative
Gr\"{o}bner basis theory~\cite{Mor94,Gre99} comes into play in these
type of investigations and turns out to be successful. Some of our
proofs of commutativity gadgets are indeed first obtained by
non-commutative Gr\"{o}bner basis computations, using packages such as
{\it GBNP\/}~\cite{CK10} and {\it NCGB\/}~\cite{HS97}.

\section{Examples of BCS games}\label{sec:examples}

One of the major motivation that leads to the definition of a BCS game
is the magic square game and its closely related variants such as the
pentagram game. The previously known magic BCS games are isolated
examples usually employing only parity constraints. In this section,
we will show that the BCS game is actually a versatile framework, and
that most of the known non-local games with perfect strategy are
usually BCS games in disguise. For example, the quantum chromatic
number~\cite{CMN+07}, the Kochen-Specker sets~\cite{KS67}, the
Deutsch-Jozsa game~\cite{BCT99}, etc., can either be completely
described in the BCS framework or give non-trivial magic BCS games at
least. More generally, any non-local game that has a perfect strategy
using projective measurements on a maximally entangled state gives
rise to a BCS.

\subsection{BCS for satisfiability problems}

We start the discussion by considering the most natural examples of
binary constraint systems, those defined by a Boolean formula. In a
Boolean formula, a literal is either a variable (positive literal) or
a negation of a variable (negative literal). A clause is a disjunction
of literals. Most of the time, we will be working with formulas of the
conjunctive normal form, which is a conjunction of clauses. For a
formula of the form $\bigwedge_{j=1}^m C_j$, the BCS associated with
it is the constraint system of $\{ C_j \}_{j=1}^m$.

In this paper, we will discuss several $*$-versions of the
satisfiability problems including \problem{$k$-SAT} where each clause
refers to at most $k$ literals, \problem{1-in-3-SAT} where a clause is
satisfied if it refers to at most three literals and exactly one of
them is assigned \true, and \problem{HORN-SAT} where each clause has
at most one positive literal.

\subsection{BCS for quantum chromatic number}

\label{sec:quantum_chromatic_number}

The quantum chromatic number of a graph $G$~\cite{CMN+07,SS12,MSS13},
denoted by $\chi^*(G)$, is the minimum number of colors necessary for
a non-local game defined by $G$ to have a perfect
strategy~\cite{CMN+07}. In the game, Alice and Bob each receive a
vertex $u, v$ respectively, and are required to send back to the
verifier the color $\alpha, \beta \in \{0, \ldots, k-1\}$ of the
vertex they got. Two conditions have to be met for them to win the
game: (1) if $u=v$, then $\alpha=\beta$, and (2) if $(u,v) \in E(G)$,
then $\alpha \ne \beta$. If Alice and Bob are restricted to classical
strategies only, the minimum number $k$ necessary for them to win the
game is exactly the (classical) chromatic number of the graph
$\chi(G)$. The quantum chromatic number $\chi^*(G)$ is defined to be
the smallest number $k$ such that Alice, Bob can win the game with a
quantum strategy. The quantum $k$-coloring problem, denoted as
\problem*{$k$-COLORING}, is the decision problem that asks if
$\chi^*(G) \le k$ given $G$ as input.

A constraint system arises naturally for the quantum coloring problem.
To each vertex $v$ in graph $G$, assign $k$ binary variables $x_{v,0},
x_{v,1}, \ldots, x_{v,k-1} \in \{ 0, 1 \}$. These variables are
indicators of whether $v$ has color $\alpha$, and therefore sum to
$1$. For $(u, v) \in E(G)$, we require that $x_{v,\alpha}x_{w,\alpha}
= 0$ and $x_{w,\alpha}x_{v,\alpha} = 0$. The constraint system is
therefore
\begin{subequations}\label{eq:bcs_color}
  \begin{align}
    x_{v,0} + x_{v,1} + \cdots + x_{v,k-1} = 1, & \text{ for all
      vertices } v\in V(G),\label{eq:bcs_color_1}\\
    x_{v,\alpha}x_{w,\alpha} = 0, & \text{ for all adjacent } v,w,
    \text{ and } \alpha= 0, 1, \ldots, k-1,\label{eq:bcs_color_pair}\\
    x_{v,\alpha}^2 - x_{v,\alpha} = 0, & \text{ for all } v\in V(G),
    \text{ and } \alpha = 0, 1, \ldots, k-1.\label{eq:bcs_color_2}
  \end{align}
\end{subequations}

It is easy to see that $\chi(G) \le k$ if and only if the above
constraint system has a real (and therefore $0, 1$) solution. If the
quantum chromatic number $\chi^*(G) \le k$, then the analysis
in~\cite{CMN+07} (Proposition 1, Eq.~(4) and the discussion above the
equation) guarantees that the measurement operators of Alice is an
operator assignment of the constraint system Eq.~\eqref{eq:bcs_color}.
In fact, the Eqs.~\eqref{eq:bcs_color_1} and~\eqref{eq:bcs_color_2}
correspond to the fact that Alice uses projective measurements, and
Eq.~\eqref{eq:bcs_color_pair} corresponds to the Eq.~(4)
of~\cite{CMN+07} (namely, $E_{v\alpha} E_{w\alpha} = 0$). On the other
hand, if the constraint system in Eq.~\eqref{eq:bcs_color} has an
operator assignment, it follows easily that $\chi^*(G) \le k$ again
by~\cite{CMN+07}. Therefore, the BCS problem defined by this
constraint system is a \yes-instance, if and only if the graph $G$ has
quantum chromatic number $\chi^*(G) \le k$. By the above discussions,
we can study the quantum chromatic number and the
\problem*{$k$-COLORING} problem in the BCS framework, although the
derived BCS game will be slightly different from the coloring game
considered in~\cite{CMN+07}. We will call an operator assignment of
this BCS a $k$-coloring operator assignment of $G$.

We have shown how one can represent the quantum graph coloring problem
by a BCS. Conversely, it is also possible to represent any BCS problem
as a generalized graph coloring problem, called the constraint graph
coloring problem. A constraint graph coloring problem is defined by a
graph $G=(V,E)$, a set of colors $\Sigma_v$ allowed for each vertex
$v\in V$ and a constraint $c_e$ for each edge $e=(u,v)\in E$ where
$c_e: \Sigma_u \times \Sigma_v \rightarrow \{0, 1\}$. The ordinary
graph coloring problem is a constraint graph problem where
$\Sigma_v$'s are the same and $c_e$ is defined such that $c_e(\alpha,
\beta) = 1$ if and only if $\alpha \ne \beta$. One can define the
quantum colorability of a constraint graph by a two-player game
similar to that for the quantum coloring problem~\cite{CMN+07}. The
constraint graph problem defined by a BCS $\{ C_j \}$ is on a
bipartite graph that has a vertex for each constraint $C_j$ and a
vertex for each variable. The set of colors for the constraint vertex
$C_j$ is the set of all possible assignment to the variables in the
constraint, and the set of colors for the variable vertex is always
$\{0, 1\}$. If a variable $x$ occurs in a constraint $C_j$, then there
is an edge between them whose constraint checks if the coloring of
$C_j$ is a satisfying assignment of the constraint and if the
assignment to $x$ is consistent with that in $C_j$. The game defined
by the constraint graph coloring problem associated to a BCS is
slightly different from the BCS game defined in~\cite{CM12}. The
difference comes from the consistency checks in the constraint graph
coloring game, which turn out to be redundant for constraint graphs
defined by BCS games as indicated by~\cite{CM12}.

\subsection{BCS for Kochen-Specker sets}

Another important concept that fits well in the BCS framework is the
Kochen-Specker sets~\cite{KS67,Hel13}. Generally, a Kochen-Specker set
is a set of projections $S = \{ P_j \}$ such that there is no $0,
1$-valued function $h:S \rightarrow \{ 0, 1\}$ satisfying the
condition: $\sum_{P_j \in B} h(P_j) = 1$ for any subset $B$ of $S$
such that $\sum_{P_j\in B} P_j = I$. Most of the examples of
Kochen-Specker sets consist solely of rank-one projections, in which
case the set can also be described by a set of unit vectors. A set $S
= \{ u_i \}$ of unit vectors in $\H$ is a Kochen-Specker set if there
is no function $h:S\rightarrow \{ 0, 1 \}$ such that for any subset
$B$ of $S$ forming an orthonormal basis of $\H$, $\sum_{u\in B}h(u) =
1$. The first finite construction of Kochen-Specker set consists of
$117$ vectors in $\mathbb{R}^3$~\cite{KS67}, but this number has been
reduced to $31$ by Conway and Kochen~\cite{Per02}.

To describe Kochen-Specker sets in the BCS framework, let us consider
the following linear constraint system of a set $S$ of binary
variables $x_j$
\begin{subequations}\label{eq:bcs_ks}
  \begin{align}
    \sum_{x_j\in B_k} x_j & = 1, \text{ for some } B_1, B_2, \ldots,
    B_m \subset S, \label{eq:bcs_ks_1}\\
    x_j^2 - x_j & = 0, \text{ for all } x_j \in S.
  \end{align}
\end{subequations}
The classical assignment of the constraint system corresponds exactly
to the $0, 1$-valued function $h$ in the definition of Kochen-Specker
sets. Therefore, for any BCS in Eq.~\eqref{eq:bcs_ks} that has a
quantum assignment but no classical assignment, the quantum assignment
for the BCS forms a Kochen-Specker set. On the other hand, any
Kochen-Specker set of projections defines a magic constraint system of
the above form where $B_j$'s are chosen to be all the subsets of
projections summing to $I$. We briefly mention that a weak variant of
Kochen-Specker sets is also investigated in the literature~\cite{RW04}
and it is easy to see that these sets also correspond to BCSs by using
constraints of the form $x_jx_k = 0$. Relations between weak
Kochen-Specker sets and certain pseudo-telepathy games are known
previously~\cite{RW04,MSS13}.

For later references, we denote \problem*{KOCHEN-SPECKER} to be the
$*$-problem defined by binary constraint systems whose constraints are
all of the form $\sum_{x_j\in B} x_j = 1$ as in
Eq.~\eqref{eq:bcs_ks_1}.

\subsection{BCS from general non-local games}

More generally, any non-local game defines a BCS such that the game
has a perfect strategy using maximally entangled states and projective
measurements if and only if the corresponding BCS has quantum
satisfying assignment. This is, in some sense, a weak converse of the
main theorem in~\cite{CM12}, as any BCS game with a perfect strategy
uses maximally entangled states and projective measurements.

For any non-local game with distribution $p$ on questions $S\times T$
and verifier $V:A\times B\times S\times T\rightarrow \{0, 1\}$, define
a BCS of the game in the following
\begin{subequations}\label{eq:bcs_nonlocal}
  \begin{align}
    \sum_{a\in A} x_{s,a} & = 1, \text{ for all } s\in S,\\
    \sum_{b\in B} y_{t,b} & = 1, \text{ for all } t\in T,\\
    x_{s,a}y_{t,b} & = 0, \text{ for } p(s,t) > 0, V(a,b,s,t) = 0,\\
    x_{s,a}^2-x_{s,a} & = 0, \text{ for all } s\in S, a\in A,\\
    y_{t,b}^2-y_{t,b} & = 0, \text{ for all } t\in T, b\in B.
  \end{align}
\end{subequations}

\begin{lemma}\label{lem:game2bcs}
  A non-local game has a perfect strategy using maximally entangled
  state and projective measurements if and only if the corresponding
  BCS has a quantum satisfying assignment.
\end{lemma}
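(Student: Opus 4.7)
The key technical tool in both directions is the transfer identity $(M\otimes\1)\ket{\psi}=(\1\otimes M^T)\ket{\psi}$ for the maximally entangled state $\ket{\psi}=\frac{1}{\sqrt d}\sum_i\ket{i,i}$ on $\H\otimes\H$, together with the standard fact that $(\1\otimes N)\ket{\psi}=0$ iff $N=0$, and the positivity observation: if $P,Q$ are positive operators with $\bra{\psi}P\otimes Q\ket{\psi}=0$, then already $(P\otimes Q)\ket{\psi}=0$. I will also use the elementary lemma that a family of orthogonal projections summing to $\1$ is automatically pairwise orthogonal, hence pairwise commuting.

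For the ``only if'' direction, suppose a perfect strategy uses PVMs $\{A_s^a\}$, $\{B_t^b\}$ on a maximally entangled state of local dimension~$d$. Perfection means that for every $(s,t)$ with $p(s,t)>0$ and every losing pair $(a,b)$ we have $\bra{\psi}A_s^a\otimes B_t^b\ket{\psi}=0$, hence $(A_s^a\otimes B_t^b)\ket{\psi}=0$. Applying the transfer identity rewrites this as $\bigl(\1\otimes B_t^b(A_s^a)^T\bigr)\ket{\psi}=0$, so $B_t^b(A_s^a)^T=0$ as operators on $\H_B$. I now set $X_{s,a}=(A_s^a)^T$ and $Y_{t,b}=B_t^b$, both acting on the single Hilbert space $\H_B$. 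Because a transpose of a Hermitian projection is still a Hermitian projection, and because taking transposes is linear, $\{X_{s,a}\}_a$ and $\{Y_{t,b}\}_b$ are PVMs, which takes care of the sum-to-$\1$, idempotency and self-adjointness constraints of the BCS in Eq.~\eqref{eq:bcs_nonlocal}. The cross constraint $X_{s,a}Y_{t,b}=0$ follows from $B_t^b(A_s^a)^T=0$ by taking adjoints, and the local commutativity is then automatic (variables sharing a sum-to-$\1$ constraint are in a PVM, and a pair $X_{s,a},Y_{t,b}$ linked by $X_{s,a}Y_{t,b}=0$ automatically satisfies $Y_{t,b}X_{s,a}=0$ since both are self-adjoint).

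For the ``if'' direction, let $\{X_{s,a}\},\{Y_{t,b}\}$ be an operator assignment on a finite-dimensional $\H$ of dimension $d$. Take $\H_A=\H_B=\H$, share $\ket{\psi}=\frac{1}{\sqrt d}\sum_i\ket{i,i}$, and declare Alice's PVM on input $s$ to be $\{(X_{s,a})^T\}_a$ and Bob's PVM on input $t$ to be $\{Y_{t,b}\}_b$. The BCS conditions make each of these a genuine PVM (the sum-to-$\1$, idempotency and self-adjointness of $X_{s,a}$ are preserved under transpose). A single application of the transfer identity gives
\begin{equation*}
\bra{\psi}(X_{s,a})^T\otimes Y_{t,b}\ket{\psi}=\tfrac{1}{d}\tr\bigl(Y_{t,b}X_{s,a}\bigr),
\end{equation*}
and for every losing pair $(a,b)$ with $p(s,t)>0$ the BCS constraint $X_{s,a}Y_{t,b}=0$ forces this probability to vanish; summing the complementary (winning) probabilities over $a,b$ for each $(s,t)$ then yields total winning probability~$1$.

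The only conceptual subtlety, which I would highlight carefully, is that the BCS has all variables living on one Hilbert space while a non-local strategy splits operators across Alice and Bob. The maximally entangled state is exactly the bridge: it lets me migrate Alice's projections to Bob's side via transposition while preserving all algebraic relations that the BCS axioms demand. Beyond this, every step is routine verification of projection identities, and no further combinatorial structure of $V$ or $p$ enters.
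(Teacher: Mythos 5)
Your argument is correct and is essentially the paper's proof: the paper deduces $\tr(A_s^a\bar B_t^b)=0$ from perfection and then $A_s^a\bar B_t^b=0$ by positivity, setting $x_{s,a}=A_s^a$, $y_{t,b}=\bar B_t^b$, which is the same operator assignment as yours up to the mirror choice of conjugating Bob's projections instead of transposing Alice's (for Hermitian $M$, $\bar M=M^T$). You just spell out the transfer-identity bookkeeping and the PVM verifications that the paper leaves implicit.
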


\begin{proof}
  Let ${A_s^a}$, ${B_t^b}$ be the projective measurements of Alice and
  Bob, and let $\ket{\phi}$ be the maximally entangled state they use.
  They win the non-local game with probability $1$ if and only if they
  never give ``forbidden'' answers $a, b$ when $s, t$ are asked and
  $V(a,b,s,t) = 0$. This means that $\bra{\phi} A_s^a\otimes B_t^b
  \ket{\phi} = 0$, which simplifies to $\tr(A_s^a\bar{B}_t^b) = 0$ for
  such $(a,b,s,t)$'s. As both $A_s^a$ and $\bar{B}_t^b$ are positive
  semidefinite, it follows that $A_s^a \bar{B}_t^b = 0$. Therefore,
  $x_{s,a} = A_s^a$ and $y_{t,b} = \bar{B}_t^b$ is an operator
  assignment of the BCS in Eq.~\eqref{eq:bcs_nonlocal}. Conversely, if
  the BCS has an operator assignment, then the operator assignment
  gives a perfect strategy for Alice and Bob.
\end{proof}

Lemma~\ref{lem:game2bcs} can be used to construct different magic BCS
instances from many preexisting examples of pseudo-telepathy
games~\cite{BBT05}, as long as the game is a two player game that uses
maximally entangled state and projective measurements. The
Deutsch-Jozsa games~\cite{BCT99}, the so called matching
games~\cite{BJK04,BBT05}, etc., are such examples that can give
non-trivial magic binary constraint systems.

\subsection{BCS for non-binary constraint systems}

Suppose we have a constraint system where the domain of the variables
is not binary. Can we define quantum satisfiability for such
constraint systems? One simple solution is to introduce indicator
variables, as hinted by the quantum $3$-coloring case discussed in
Subsection~\ref{sec:quantum_chromatic_number}. For example, if the
domain of a variable $x$ is $\{0, 1, \ldots, k-1\}$, one can use $k$
variables $x_\alpha$ for $\alpha = 0, 1, \ldots, k-1$ each indicating
whether $x=\alpha$. This way, we can rewrite the constraints in terms
of the new indicator variables and the problem is translated to the
binary case.

One caveat is that different choices of classically equivalent initial
constraint system may result in BCS systems that are not equivalent in
the quantum sense. Take the graph coloring problem as an example. Let
$x_v$ be the variable representing the color of the vertex $v$. One
way to specify the coloring constraint is the single constraint $x_v
\ne x_w$ for $(v,w)\in E$, the other way is to consider $k$ separate
constraints of ``if $x_u=\alpha$ then $x_v\ne \alpha$''. Following the
single constraint definition, there will be implied commutativity
relations between variables say $x_{u,0}$ and $x_{v,1}$, while in the
second formulation of the problem, there will be no such implied
conditions. It turns out that the latter definition is consistent with
the one defined in~\cite{CMN+07}.

\section{Reductions of BCS problems}

\label{sec:reductions}

Now that we have many interesting examples of $*$-problems, it is
natural to investigate the relations between the structure and
complexity of these different problems. In this section, we propose a
particular type of reduction that is suitable for such discussions.
The main idea is to exploit the local commutativity of $*$-problems in
the design of the reductions. We call such reductions of $*$-problems
the locally commutative reductions or $*$-reductions to emphasize its
key features. The first $*$-reduction that we discuss is in the
following theorem.

\begin{theorem}\label{thm:3coloring}
  \problem*{3-SAT} is polynomial-time (Karp) reducible to
  \problem*{3-COLORING}.
\end{theorem}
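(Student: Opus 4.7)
The plan is to adapt the classical Karp reduction from \problem{3-SAT} to \problem{3-COLORING}, augmented by a commutativity gadget that forces operators corresponding to literals in a common clause to commute in the target BCS. Recall the classical construction: take a ``palette'' triangle $\{T,F,B\}$; for each Boolean variable $x_i$ introduce vertices $v_i,\bar v_i$ joined to each other and to $B$, so that in any proper 3-coloring one of them takes the color of $T$ and the other of $F$; for each clause attach a standard 6-vertex OR-gadget that is 3-colorable iff at least one of its literal vertices is colored $T$. I would use this graph, possibly modified with extra edges as described below, as the target \problem*{3-COLORING} instance.

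For the forward direction, given a quantum satisfying assignment $\{X_i\}$ for the \problem*{3-SAT} instance, I would build an operator assignment for the BCS of Eq.~\eqref{eq:bcs_color}: fix scalar projectors on the palette (e.g.\ $x_{T,T}=\1$, all other palette indicators $0$, and symmetrically for $F,B$); set $x_{v_i,T}=X_i$, $x_{v_i,F}=\1-X_i$, $x_{v_i,B}=0$, dually for $\bar v_i$; and define the OR-gadget vertex indicators as polynomial expressions in the three $X_i$'s of the clause, mimicking the classical coloring strategy. The condition that $\{X_i\}$ satisfies the 3-SAT clause, together with the local commutativity already granted by condition (c) of the \problem*{3-SAT} BCS for variables in the same clause, should make all sum, idempotency, and edge constraints of the \problem*{3-COLORING} BCS valid.

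For the reverse direction, I would set $X_i := x_{v_i,T}$ (after canonically gauging the palette projectors) and show, using only relations in the ideal $\I$ of the \problem*{3-COLORING} BCS, that (i) $X_i^2=X_i$ and $X_i^*=X_i$, (ii) $X_iX_j=X_jX_i$ whenever $x_i,x_j$ occur in a common clause, and (iii) the self-adjoint polynomial $\1-(\1-X_{i_1})(\1-X_{i_2})(\1-X_{i_3})$ associated to each clause equals $\1$ modulo $\I$. The first property is immediate from the sum and idempotency constraints at $v_i$; the third is the role of the classical OR-gadget, translated to the algebra $\A/\I$; the second is precisely what the commutativity gadget must deliver.

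The main obstacle is the design and verification of this commutativity gadget. In the standard OR-gadget, the edge relations alone give no \emph{a priori} reason for $x_{v_i,T}$ and $x_{v_j,T}$ to commute when $v_i$ and $v_j$ are not adjacent, yet without this commutativity the reverse-direction extraction fails to yield a \problem*{3-SAT} assignment. I would therefore enrich the gadget with auxiliary vertices and edges so that, through a chain of shared \emph{locally commutative} constraints, the BCS ideal contains the commutator $i\bigl(x_{v_i,T}x_{v_j,T}-x_{v_j,T}x_{v_i,T}\bigr)$ for every pair of literal vertices sharing a clause, while classical 3-colorability is preserved iff the clause is satisfiable. Following the paper's methodology, the algebraic verification that this gadget enforces commutativity---and that the forward-direction operator expressions still satisfy its new edge constraints---can be carried out by non-commutative Gr\"obner basis computation using the {\it GBNP\/} or {\it NCGB\/} packages mentioned earlier.
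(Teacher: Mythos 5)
Your proposal correctly identifies the overall structure of the argument --- base the reduction on the classical \problem{3-SAT} $\to$ \problem{3-COLORING} construction, observe that the forward direction works because local commutativity in the \problem*{3-SAT} instance lets one simultaneously diagonalize the clause variables and re-use the classical coloring blockwise, notice that the reverse direction breaks because non-adjacent literal vertices in the same clause gadget need not commute, and repair it with a commutativity gadget. This is precisely the paper's strategy. But the proof stops exactly where the real work begins: you acknowledge that ``the main obstacle is the design and verification of this commutativity gadget,'' and then you do not design or verify one. You only say you ``would enrich the gadget with auxiliary vertices and edges'' and ``verify\ldots by non-commutative Gr\"obner basis computation.'' That is an intention, not a proof. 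Without exhibiting a concrete graph and demonstrating both (a) that every quantum $3$-coloring assignment forces the two designated vertex-operators to commute (in all three colors, not just the shared one), and (b) that \emph{any} pair of commuting projector triples summing to $\1$ on those two vertices extends to a quantum $3$-coloring of the whole gadget (the extendibility condition, which you mention only in passing but which is essential for the forward direction to survive the enrichment), the reduction is unproved.

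For comparison, the paper's proof supplies exactly this missing content: the commutativity gadget is the triangular prism (Lemma~\ref{lem:prism}), and the verification rests on two structural lemmas --- Lemma~\ref{lem:link}, that operators assigned to \emph{adjacent} vertices in any quantum $3$-coloring commute (a fact special to $k=3$ and not obvious), and Lemma~\ref{lem:triangle}, that a triangle forces $u_\alpha + v_\alpha + w_\alpha = \1$. The Gr\"obner computation you gesture toward is how the paper \emph{discovered} the gadget, but the actual proof is a short hand calculation using those two lemmas, followed by a careful reverse-direction argument that block-diagonalizes the palette $\{F,T,B\}$ operators into the six color permutations and reasons classically within each block. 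Your phrase ``after canonically gauging the palette projectors'' papers over this last step. In short: your plan is sound, but you have deferred, rather than closed, the central technical gap.
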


\begin{proof}
  Suppose we have an instance of the \problem*{3-SAT} problem,
  $\bigwedge_{j=1}^m C_j$, where each clause is a disjunction of at
  most three literals. The aim is to construct a graph $G$ such that
  the \problem*{3-SAT} instance has an operator assignment if and only
  if the graph has quantum chromatic number less than or equal to $3$.

  \begin{figure}[htbp]
    \centering
    \includegraphics{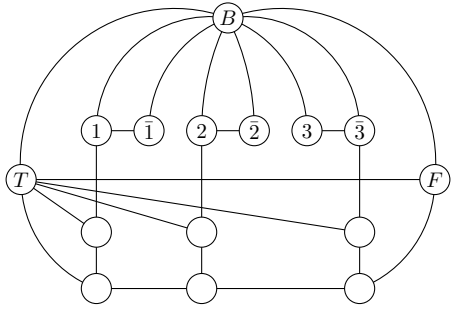}
    \caption{Classical gadget for reduction between \problem{3-SAT}
      and \problem{3-COLORING}}
    \label{fig:3coloring}
  \end{figure}
  The first idea is to check if the reductions for the classical
  problems \problem{3-SAT} and \problem{3-COLORING} will work here. We
  choose one particular such reduction, but it turns out the exact
  choice is not important. In the classical reduction, There will be
  three vertices $F, T, B$ that form a triangle in the graph $G$. They
  will have different colors in any $3$-coloring of the graph, and
  without loss of generality, assume that $F, T, B$ are colored with
  colors $0, 1, 2$ respectively. For each variable $x_j$, create two
  adjacent vertices $j, \bar{j}$ both connected to $B$. We will refer
  to these vertices as the variable vertices. For each clause, say
  $C=x_1 \vee x_2 \vee \neg x_3$, construct a gadget graph as in
  Fig.~\ref{fig:3coloring}. The reason that the reduction works is
  that the gadget is $3$-colorable if and only if the vertices $1, 2,
  \bar{3}$ are not all colored $0$.

  We are already halfway done as it can be shown that if the
  \problem*{3-SAT} instance has an operator assignment, then the graph
  $G$ constructed above indeed has a $3$-coloring operator assignment.
  For each vertex $v$, there will be a tuple of three operators
  assigned, denoted by $(X_{v,0}, X_{v,1}, X_{v,2})$. For vertices $F,
  T, B$, the tuples of operators are $(I, 0, 0)$, $(0, I, 0)$, $(0, 0,
  I)$. For each variable vertex $j$ corresponding to variable $x_j$,
  the tuple is $(I-X_j, X_j, 0)$, and for vertex $\bar{j}$ the tuple
  is $(X_j, I-X_j, 0)$. We argue that the rest of the vertices of each
  gadget in Fig.~\ref{fig:3coloring} can also be assigned a tuple of
  operators such that all the constraints of the \problem*{3-COLORING}
  problem are met. The reason is that, by the local commutativity
  condition, the operators assigned to the variables of each clause in
  the \problem*{3-SAT} problem are pairwise commuting, and can be
  simultaneously diagonalized in some basis. In that basis, the
  problem is essentially classical and the operator assignment can be
  found for each of the remaining vertex in the gadget by the
  correctness of the classical reduction.

  Unfortunately, however, if the graph $G$ is quantum $3$-colorable,
  it may not guarantee the satisfiability of the \problem*{3-SAT}
  instance. For example, any operator assignment for the
  \problem*{3-SAT} problem requires the operators assigned to the
  variables of each clause to be pairwise commuting. However, if we
  extract operators from the operators assigned to the variable
  vertices in a quantum $3$-coloring assignment, commutativity is not
  always promised. The natural way to fix this problem is to come up
  with some other gadget that guarantees the commutativity and modify
  the graph $G$ accordingly.

  The requirements for such gadgets are twofold. First, it will
  guarantee commutativity of operators assigned to two particular
  vertices (commutativity condition). Second, in order to reuse the
  classical reduction, it is required that for any two tuples of
  commuting projection operators summing to $I$ assigned to the two
  vertices, one can always extend the assignment to all the remaining
  vertices in the gadget so that the completed assignment is a quantum
  $3$-coloring assignment (extendibility condition). We call such
  gadgets the commutativity gadgets. By Lemma~\ref{lem:prism}, there
  is such a commutativity gadget for the \problem*{3-COLORING}
  problem, which turns out to be the graph of a triangular prism as in
  Fig.~\ref{fig:prism}.

  Now, for any two non-adjacent vertices $u, v$ in the classical
  gadgets as in Fig.~\ref{fig:3coloring}, attach a commutativity
  gadget and identify vertices $u, v$ with the vertices $a, e$ in the
  gadget respectively. Let us call the resulting graph $G^*$. It is
  easy to see that the proof of \problem*{3-SAT} satisfiability
  implying quantum $3$-colorability of $G^*$ remains almost unchanged.
  We will focus on the converse direction in the following.

  Suppose the graph $G^*$ has a quantum $3$-coloring. The
  commutativity gadget and Lemma~\ref{lem:link} imply that operators
  assigned to the vertices of the classical gadget in the quantum
  $3$-coloring assignment are pairwise commuting. Especially, the
  operators of vertices $F, T, B$ actually commute with all other
  assigned operators in graph $G^*$. It is therefore possible to
  diagonalize the operators for the vertices $F, T, B$ so that each of
  them is a direct sum of six operators corresponding to the six
  different permutation of colors $0, 1, 2$ for these three vertices.
  That is the operators for them can be written as
  \begin{equation*}
    X_{F,\alpha} = \bigoplus_{j=1}^6 f^{(\alpha)}_j I_{\H_j},\;
    X_{T,\alpha} = \bigoplus_{j=1}^6 t^{(\alpha)}_j I_{\H_j},\;
    X_{B,\alpha} = \bigoplus_{j=1}^6 b^{(\alpha)}_j I_{\H_j},
  \end{equation*}
  for $\alpha = 0, 1, 2$, where
  \begin{equation*}
    \begin{split}
      f^{(0)} & = (1, 1, 0, 0, 0, 0),\\
      f^{(1)} & = (0, 0, 1, 1, 0, 0),\\
      f^{(2)} & = (0, 0, 0, 0, 1, 1),\\
      t^{(0)} & = (0, 0, 1, 0, 1, 0),\\
      t^{(1)} & = (1, 0, 0, 0, 0, 1),\\
      t^{(2)} & = (0, 1, 0, 1, 0, 0),\\
      b^{(0)} & = (0, 0, 0, 1, 0, 1),\\
      b^{(1)} & = (0, 1, 0, 0, 1, 0),\\
      b^{(2)} & = (1, 0, 1, 0, 0, 0).
    \end{split}
  \end{equation*}
  As all other operators in the assignment commute with these nine
  operators, they can also be written as a direct sum of operators on
  the six Hilbert spaces. Therefore, the set of operators restricted
  to any of the Hilbert spaces is also an operator assignment. Without
  loss of generality, consider operators restricted to space $\H_1$.
  As the operators in each gadget commute, we can use the classical
  reduction and assign operators $X_{j,1}$ to each variable $x_j$ in
  the \problem*{3-SAT} instance.
\end{proof}

To prove that the commutativity gadget in Lemma~\ref{lem:prism} works,
we need the following two Lemmas. For simplicity, we will use
$v_\alpha$ to represent the operator assigned to vertex $v$ for color
$\alpha$ (previously denoted as $X_{v,\alpha}$), and use $1$ to
represent $I$. Another way of seeing these notions of operators is to
think of them as non-commutative variables and the reasoning of them
in the proof as polynomial identities of these non-commutative
variables.

\begin{lemma}\label{lem:link}
  For any $3$-coloring operator assignment of a graph, the operators
  assigned to adjacent vertices always commute.
\end{lemma}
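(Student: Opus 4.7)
The plan is to establish commutativity by a direct algebraic computation that uses only the defining constraints of a $3$-coloring operator assignment: at each vertex $v$ the three operators $v_0,v_1,v_2$ are pairwise commuting projections summing to $I$, and for every edge $(u,v)$ and every color $\alpha$ one has $u_\alpha v_\alpha = 0$.

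A preliminary observation I would record is that commuting projections summing to $I$ are automatically mutually orthogonal: squaring $v_0+v_1+v_2=I$ and using $v_\alpha^2=v_\alpha$ together with the pairwise commutativity forces $v_\alpha v_\beta=0$ for distinct colors $\alpha,\beta$ at the same vertex. I also use self-adjointness to pass from $u_\alpha v_\alpha=0$ to $v_\alpha u_\alpha=0$.

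Now fix an edge $(u,v)$ and colors $\alpha,\beta$. The diagonal case $\alpha=\beta$ is immediate from the edge constraint. For $\alpha\neq\beta$ let $\gamma$ be the remaining color. I would insert the identity $u_\alpha+u_\beta+u_\gamma=I$ to the right of $v_\beta$, obtaining
\[
  u_\alpha v_\beta = u_\alpha v_\beta u_\alpha + u_\alpha v_\beta u_\beta + u_\alpha v_\beta u_\gamma.
\]
The middle term vanishes because $v_\beta u_\beta=0$. For the last term, substituting $v_\beta=I-v_\alpha-v_\gamma$ gives
\[
  u_\alpha v_\beta u_\gamma = u_\alpha u_\gamma - u_\alpha v_\alpha u_\gamma - u_\alpha v_\gamma u_\gamma,
\]
and each summand vanishes by a single vertex or edge relation ($u_\alpha u_\gamma=0$, $u_\alpha v_\alpha=0$, and $v_\gamma u_\gamma=0$, respectively). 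Hence $u_\alpha v_\beta = u_\alpha v_\beta u_\alpha$, which is manifestly self-adjoint. Taking adjoints then yields $v_\beta u_\alpha = u_\alpha v_\beta u_\alpha$ as well, so $u_\alpha v_\beta = v_\beta u_\alpha$.

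I do not expect a genuine obstacle: the argument is a few lines once the right identity is inserted. The only conceptual point worth flagging is that the commutator $[u_\alpha,v_\beta]$ for $\alpha\neq\beta$ is \emph{not} forced to vanish by the local commutativity condition of the BCS alone, since $u_\alpha$ and $v_\beta$ never co-occur in a single constraint; the structural content of the lemma is precisely that sandwiching the vertex identity $\sum_\gamma u_\gamma = I$ between $u_\alpha$ and $v_\beta$ reduces everything to products that are ruled out by vertex-internal or same-color edge constraints.
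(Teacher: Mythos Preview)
Your proof is correct, and it takes a genuinely different route from the paper's. The paper establishes $[u_0,v_1]=0$ by writing down two commutator identities,
\[
u_2(v_0+v_1+v_2-1)u_0-u_0(v_0+v_1+v_2-1)u_2
\quad\text{and}\quad
u_0 v_1(u_0+u_1+u_2-1)-(u_0+u_1+u_2-1)v_1 u_0,
\]
expanding with the edge relations, summing, and cancelling the cross terms $u_2v_1u_0-u_0v_1u_2$ so that only $[u_0,u_2]+[v_1,u_0]$ survives; then the local commutativity $[u_0,u_2]=0$ finishes. Your argument instead first derives the vertex-orthogonality $u_\alpha u_\gamma=0$ (which the paper never uses), then shows $u_\alpha v_\beta = u_\alpha v_\beta u_\alpha$ and appeals to self-adjointness. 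Your approach is arguably cleaner and more conceptual---reducing commutativity to the self-adjointness of a product is a nice trick---while the paper's commutator-cancellation argument is closer in spirit to how such identities are found via non-commutative Gr\"obner basis computations, which is the methodology the paper emphasizes. Both buy the same conclusion with comparable effort; yours just makes the geometric content (orthogonality of the color projections at a vertex) explicit.
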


\begin{proof}
  Let $u, v$ be the two adjacent vertices. The commutativity of
  $u_\alpha$ and $v_\alpha$ is easy to see as their products $u_\alpha
  v_\alpha$ and $v_\alpha u_\alpha$ are both $0$.

  By the symmetry of the problem, it suffices then to show that $u_0$
  and $v_1$ commutes. Consider the following two identities
  \begin{equation*}
    \begin{split}
      u_2 (v_0 + v_1 + v_2 - 1) u_0 - u_0 (v_0 + v_1 + v_2 - 1)
      u_2 & = u_2v_1u_0 - u_0v_1u_2 + [u_0, u_2],\\
      u_0 v_1 (u_0 + u_1 + u_2 - 1) - (u_0 + u_1 + u_2 - 1) v_1 u_0 &
      = u_0v_1u_2 - u_2v_1u_0 + [v_1, u_0].
    \end{split}
  \end{equation*}
  In the computation of the above two identities we have used the
  coloring constraints $u_\alpha v_\alpha = 0$, $v_\alpha u_\alpha =
  0$. Taking the summation of the two identities and noticing that the
  left hand side of them are both $0$ and that $u_0$, $u_2$ commute,
  we have $[v_1, u_0] = 0$.
\end{proof}

We note that the above lemma is a special property of $3$-coloring and
may not hold in the $k$-coloring case for $k > 3$.

\begin{lemma}\label{lem:triangle}
  If vertices $u, v, w$ form a triangle in a graph, their assigned
  operators in a $3$-coloring operator assignment satisfy $u_\alpha +
  v_\alpha + w_\alpha = 1$ for $\alpha = 0, 1, 2$.
\end{lemma}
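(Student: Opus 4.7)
The plan is to exploit the strong commutativity among the operators assigned to $u,v,w$. Since the three vertices form a triangle, all three pairs are adjacent, so by Lemma~\ref{lem:link} the operators $\{u_i, v_j, w_k : i,j,k \in \{0,1,2\}\}$ pairwise commute. Together with the coloring constraints $u_i v_i = v_i w_i = u_i w_i = 0$, this is the entire algebraic input I need. I will fix a single color $\alpha = 0$; the other two values of $\alpha$ follow by the obvious symmetry.

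First I would expand the identity
\begin{equation*}
  1 = (u_0+u_1+u_2)(v_0+v_1+v_2)(w_0+w_1+w_2) = \sum_{i,j,k} u_i v_j w_k.
\end{equation*}
Using commutativity to reorder factors, any term $u_i v_j w_k$ with two coinciding indices is killed by the corresponding orthogonality relation. Hence only the six terms with $(i,j,k)$ a permutation of $(0,1,2)$ survive, and $1 = \sum_{\sigma \in S_3} u_{\sigma(0)} v_{\sigma(1)} w_{\sigma(2)}$.

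Next I would expand each of $u_0, v_0, w_0$ in the same way, inserting copies of the identity on the other two factors; for example,
\begin{equation*}
  u_0 = u_0 \cdot 1 \cdot 1 = \sum_{j,k} u_0 v_j w_k = u_0 v_1 w_2 + u_0 v_2 w_1,
\end{equation*}
since all other summands vanish by the orthogonality relations. By the symmetric computations,
\begin{equation*}
  v_0 = u_1 v_0 w_2 + u_2 v_0 w_1, \qquad w_0 = u_1 v_2 w_0 + u_2 v_1 w_0.
\end{equation*}
Adding the three identities produces exactly the six permutation terms appearing in the expansion of $1$, so $u_0 + v_0 + w_0 = 1$, and repeating the argument for $\alpha = 1$ and $\alpha = 2$ finishes the lemma.

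There is no serious obstacle here: the only subtlety is realising that Lemma~\ref{lem:link} gives pairwise commutativity among \emph{all} the operators on the triangle, not just between the ones of equal color, so one is free to reorder factors before applying the orthogonality relations. Once this is noted, the proof is a bookkeeping exercise comparing the expansion of $1$ with those of $u_0$, $v_0$, $w_0$.
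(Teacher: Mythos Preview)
Your proof is correct. Both your argument and the paper's hinge on the same key point, namely that Lemma~\ref{lem:link} applied to each edge of the triangle makes the full family $\{u_i,v_j,w_k\}$ pairwise commuting. From there the paper takes the high road: commuting self-adjoint projections can be simultaneously diagonalized, so in each eigenbasis the problem is literally the classical one, where exactly one of $u,v,w$ receives colour~$\alpha$. You instead stay purely algebraic, expanding $1=\prod (\,\cdot_0+\cdot_1+\cdot_2)$ and each of $u_0,v_0,w_0$ and matching the surviving permutation terms. Your route is a bit more hands-on but avoids appealing to simultaneous diagonalization; the paper's route is shorter to state and makes the ``essentially classical'' structure transparent. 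The paper even remarks that one can alternatively verify the identity by direct (non-commutative Gr\"{o}bner basis) computation, and your argument is a clean, human-readable instance of exactly that.
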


\begin{proof}
  Lemma~\ref{lem:link}, when applied to the three edges of the
  triangle, shows that any two operators in the quantum assignment to
  the triangle commute. Therefore, all the operators can be
  diagonalized and the problem is essentially classical. The required
  identity then simply corresponds to the fact that one of the
  vertices will be colored with color $\alpha$. It is also possible to
  prove these identities by non-commutative Gr\"{o}bner basis
  computations, which will require more work though.
\end{proof}

\begin{lemma}\label{lem:prism}
  The triangular prism graph as in Fig.~\ref{fig:prism} is a
  commutativity gadget of vertices $a$ and $e$ for
  \problem*{3-COLORING}. By the symmetry of the graph, any
  $3$-coloring operator assignment to the graph is pairwise commuting,
  and for any two tuples of three commuting projection operators
  summing to $I$ assigned to $a$, $e$ respectively, it is always
  possible to extend the assignment to the remaining vertices such
  that the operators form a $3$-coloring operator assignment.
\end{lemma}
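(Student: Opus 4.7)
The lemma packages two claims: (i) pairwise commutativity of operators in \emph{any} 3-coloring operator assignment of the prism, and (ii) extendibility from any pair of commuting PVMs assigned to $a$ and $e$. I plan to treat these in turn, leveraging Lemmas~\ref{lem:link} and~\ref{lem:triangle} for the first part and a joint spectral decomposition argument for the second.

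\textbf{Commutativity.} Lemma~\ref{lem:link} immediately gives commutativity across all nine edges, and Lemma~\ref{lem:triangle} gives the row-sum identity $u_\gamma + v_\gamma + w_\gamma = I$ inside each of the triangles $\{a,b,c\}$ and $\{d,e,f\}$. What remains is the six non-adjacent pairs, and since the automorphism group of the prism (of order $12$, generated by the common rotation of both triangles and the matching-preserving swap of them) acts transitively on these pairs, it suffices to prove $[a_\alpha, e_\beta] = 0$ for all $\alpha, \beta$. The first reduction is to eliminate $c$ and $f$ using the substitutions $c_\gamma = I - a_\gamma - b_\gamma$ and $f_\gamma = I - d_\gamma - e_\gamma$; expanding the known $[c_\gamma, f_\delta] = 0$ and cancelling with $[a_\gamma, d_\delta] = 0$ and $[b_\gamma, e_\delta] = 0$ produces the linear relation $[a_\gamma, e_\delta] + [b_\gamma, d_\delta] = 0$, with analogous relations from the other matching-edge adjacencies and the triangle column-sums. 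These linear relations are consistent but only couple the six unknown non-adjacent commutators into a cycle (up to signs), so they do not by themselves force vanishing; the proof must additionally exploit the quadratic projection identities $v_\alpha v_\gamma = 0$ for $\alpha \ne \gamma$ and the adjacency products $v_\gamma w_\gamma = 0$. The cleanest implementation, as the paper explicitly flags for commutativity-gadget proofs, is a non-commutative Gr\"obner basis computation on the ideal generated by all the $3$-coloring constraints of the prism, which certifies that $[a_\alpha, e_\beta]$ reduces to zero.

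\textbf{Extendibility.} Suppose we are given commuting PVMs $\{a_\alpha\}$ and $\{e_\beta\}$ on some $\H$. Jointly diagonalizing, the Hilbert space splits as $\H = \bigoplus_{\alpha,\beta=0}^{2} \H_{\alpha\beta}$ with $\H_{\alpha\beta} = \mathrm{range}(a_\alpha e_\beta)$. On each block the task collapses to finding a classical $3$-coloring of the prism with $a = \alpha$, $e = \beta$, and a short case check settles this: when $\alpha \ne \beta$ the remaining four vertices are forced into a unique coloring, and when $\alpha = \beta$ exactly two valid extensions exist and either may be picked. Setting $v_\gamma$ for each $v \in \{b,c,d,f\}$ to be the projection onto the orthogonal sum of those $\H_{\alpha\beta}$ whose chosen extension colors $v$ with $\gamma$ produces an assignment under which every PVM, triangle, and adjacency relation holds block-wise by construction. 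The main obstacle is the commutativity step, since Lemma~\ref{lem:triangle} alone yields only linear relations on the non-adjacent commutators that are internally consistent but not individually vanishing, so the quadratic PVM and adjacency constraints—and in practice the non-commutative Gr\"obner basis machinery—are indispensable.
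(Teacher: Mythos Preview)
Your extendibility argument is correct and is exactly what the paper does (it simply says ``the extendibility is easy to verify classically,'' which is your joint-diagonalization-plus-classical-case-check).

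For commutativity, your symmetry reduction to the single pair $(a,e)$ is sound, and your observation that the linear commutator relations obtained by expanding $[c_\gamma,f_\delta]=0$ via Lemma~\ref{lem:triangle} only couple the six non-adjacent commutators into a cycle without forcing any individual one to vanish is also correct. But at that point you defer the entire conclusion to an unspecified non-commutative Gr\"obner basis computation; this is where the paper's proof is substantially more concrete. The paper splits into two cases and for each gives a short hand-checkable ideal-membership witness. For $\alpha=\beta$ (say both $0$) it computes $c_0 f_0 a_0 - a_0 f_0 c_0$, which vanishes because $c,f$ are adjacent, then substitutes $c_0 = 1-a_0-b_0$ and $f_0 = 1-d_0-e_0$ from Lemma~\ref{lem:triangle} and simplifies with the edge constraints (e.g.\ $d_0 a_0 = 0$, $b_0 e_0 = 0$) to obtain exactly $[a_0,e_0]$. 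For $\alpha\ne\beta$ it evaluates $e_2 d_2 a_0 - a_0 d_2 e_2$ (again zero since $d,e$ are adjacent), expands via $e_2 = 1-e_0-e_1$ and $d_2 = 1-d_0-d_1$, and simplifies using Lemma~\ref{lem:link} together with the same-color case just established to isolate $[a_0,e_1]$. The device you are missing is to start from a cubic expression that is \emph{already} zero by an adjacency relation and expand it; this is precisely what breaks the cycle of linear relations you identified and converts the quadratic constraints into a vanishing of the target commutator.
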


\begin{figure}[htbp]
  \centering
  \includegraphics{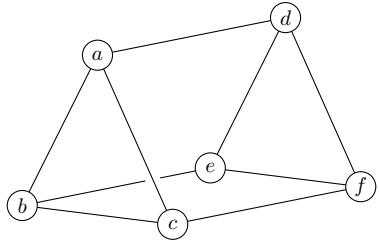}
  \caption{Triangular prism as a commutative gadget for
    \problem*{3-COLORING}}
  \label{fig:prism}
\end{figure}

\begin{proof}
  We will focus on the commutativity part of the proof as the
  extendibility is easy to verify classically.

  First, we show that $a_\alpha, e_\alpha$ commute for any $\alpha =
  0, 1, 2$. Without loss of generality, assume $\alpha = 0$. We have
  \begin{equation*}
    \begin{split}
      c_0 f_0 a_0 - a_0 f_0 c_0 & = (1-a_0-b_0) (1-d_0-e_0) a_0 - a_0
      (1-d_0-e_0) (1-a_0-b_0)\\
      & = [a_0, e_0],
    \end{split}
  \end{equation*}
  where the first equation follows from Lemma~\ref{lem:triangle}, and
  the second is a simplification using coloring constraints such as
  $d_0a_0=0$. Now by $c_0f_0=f_0c_0 = 0$, the condition $[a_0, e_0] =
  0$ follows.

  Next, we prove that $[a_\alpha, e_\beta] = 0$ where $\alpha$ is
  different from $\beta$. Without loss of generality, assume
  $\alpha=0$, $\beta=1$. Compute the following quantity
  \begin{equation*}
    \begin{split}
      e_2 d_2 a_0 - a_0 d_2 e_2 & = (1-e_0-e_1) (1-d_0-d_1) a_0 - a_0
      (1-d_0-d_1) (1-e_0-e_1)\\
      & = [a_0, d_1] + [a_0, e_0] + [a_0, e_1] + e_0d_1a_0 - a_0d_1e_0\\
      & = [a_0,e_1],
    \end{split}
  \end{equation*}
  where the first equation follows from the coloring constraints
  $e_0+e_1+e_2=1$ and $d_0+d_1+d_2=1$, the second equation is a
  simplification using coloring constraints such as $d_0a_0 = 0$ and
  $e_1d_1=0$, and the last equation uses Lemma~\ref{lem:link} and the
  first part of this lemma. The lemma now follows by the fact that
  $e_2d_2=d_2e_2=0$.

  We mention that the proof of this lemma was first obtained by
  computing the non-commutative Gr\"{o}bner basis of the ideal
  generated by the constraints and check that the commutators are in
  the ideal. A verification program {\tt prism.g} using the {\it
    GBNP\/} package~\cite{CK10} in the {\it GAP\/} system~\cite{GAP4}
  can be found in the
  \href{http://arxiv.org/e-print/1310.3794}{gzipped tar (.tar.gz)
    file} of the arXiv version of this paper.
\end{proof}

The method of combining a classical reduction with a suitable
commutativity gadget, as illustrated in the proof of
Theorem~\ref{thm:3coloring}, is applicable to finding $*$-reductions
of other problems. Another such example worth mentioning is the
reduction of \problem*{3-SAT} to \problem*{1-in-3-SAT}.

The problem \problem{1-in-3-SAT} is a variant of \problem{3-SAT} that
requires one and only one of three literals to be true in each clause.
It is shown to be \class{NP}-complete by Schaefer~\cite{Sch78}. In
fact, the monotone version of it (where no negative literals occur)
remains \class{NP}-complete. The constraint that exactly one of the
three variables is true can be described by a linear equation of the
form $x_{1} + x_{2} + x_{3} = 1$, where the addition is over real
numbers. Therefore, \problem*{1-in-3-SAT} is a special
\problem*{KOCHEN-SPECKER} problem, and the following theorem also
implies that \problem*{3-SAT} is $*$-reducible to
\problem*{KOCHEN-SPECKER}.

\begin{theorem}
  \problem*{3-SAT} is polynomial-time (Karp) reducible to
  \problem*{1-in-3-SAT}.
\end{theorem}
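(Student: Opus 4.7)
The plan is to lift the classical Schaefer-style reduction from \problem{3-SAT} to \problem{1-in-3-SAT} to the BCS setting, following the template of the proof of Theorem~\ref{thm:3coloring}. For each 3-SAT clause $C_j=(l_{j,1}\vee l_{j,2}\vee l_{j,3})$ I would splice in the standard Schaefer gadget: a constant number of fresh variables and 1-in-3 clauses over these fresh variables together with the three literals of $C_j$, such that the resulting sub-instance is classically satisfiable exactly when $C_j$ is. (If the target is monotone 1-in-3-SAT, negations are handled by introducing a complementary variable $\bar{x}_i$ per variable and using a standard sub-gadget of 1-in-3 clauses to force $\bar{x}_i = 1 - x_i$.)

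The forward direction is inherited almost verbatim from the proof of Theorem~\ref{thm:3coloring}. Given an operator assignment of the \problem*{3-SAT} instance, the three literal operators of each clause $C_j$ pairwise commute by local commutativity, so one simultaneously diagonalizes them and runs Schaefer's classical gadget in each joint eigenblock to obtain values for the fresh variables; assembling these block-by-block yields operators for the fresh variables that are block diagonal in the literal basis and therefore commute with every literal and with each other. All newly introduced 1-in-3 constraints hold by classical correctness applied blockwise.

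For the backward direction, I would introduce a commutativity gadget for \problem*{1-in-3-SAT}: a fixed-size configuration of fresh variables with 1-in-3 constraints between two input variables $u,v$ and the fresh ones having the two properties required in Theorem~\ref{thm:3coloring}, namely that every operator assignment forces $[U,V]=0$, and that any pair of commuting projections $(U,V)$ extends to an operator assignment of the gadget. Attaching a copy of this gadget to every pair of variables of each Schaefer sub-gadget that do not already co-occur in a 1-in-3 clause, one then repeats the diagonalization argument of Theorem~\ref{thm:3coloring}: in any operator assignment of the modified \problem*{1-in-3-SAT} instance, all variables inside a single Schaefer sub-gadget are pairwise commuting, so simultaneous diagonalization plus classical correctness of Schaefer's gadget in each joint eigenblock recovers an operator assignment of the original \problem*{3-SAT} instance.

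The main obstacle is the design of the commutativity gadget itself. The naive choice $u+v+f=1$ trivially enforces $[U,V]=0$ but simultaneously forces $UV=0$, which is far too rigid for extendibility; more generally every 1-in-3 constraint bundles commutativity with mutual orthogonality of its three projections, so small candidate gadgets tend either to over-constrain the inputs (forcing $UV=0$ or $U=V$) or to fail to propagate enough commutators to reach $[U,V]=0$. One therefore needs a small hypergraph of 1-in-3 clauses in which the commutator $[u,v]$ ends up in the ideal generated by the constraints while $uv$ and $u-v$ do not, and in which every classical assignment to $(u,v)\in\{0,1\}^2$ admits a consistent classical extension to the fresh variables. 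As in the discovery of the triangular-prism gadget of Lemma~\ref{lem:prism}, I would search for the right configuration computationally, computing non-commutative Gr\"obner bases with \emph{GBNP}~\cite{CK10} or \emph{NCGB}~\cite{HS97} to certify commutativity in the ideal, and doing the finite classical case analysis to certify extendibility.
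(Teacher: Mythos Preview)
Your proposal is correct and follows precisely the paper's approach: lift Schaefer's classical reduction and repair the lost commutativity with a \problem*{1-in-3-SAT} commutativity gadget. The paper resolves your main worry by observing that the gadget is already sitting inside Schaefer's five clauses---the first three, $x+u_1+u_4=1$, $y+u_2+u_4=1$, $u_1+u_2+u_3=1$, force $[x,y]=0$ (Lemma~\ref{lem:1in3}) while admitting all four classical $(x,y)$ assignments, so no separate search is needed.
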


\begin{proof}
  As in the proof of Theorem~\ref{thm:3coloring}, we start with the
  classical reduction given in~\cite{Sch78}. Let $R$ be a polynomial
  such that $R(x,y,z) = 1$ if and only if exactly one of $x, y, z$ is
  $1$. For example, one can take $R(x,y,z) = x+y+z$ as discussed
  above. For each clause $C=x\vee y\vee z$, the classical reduction
  uses six new variables $u_1, u_2, \ldots, u_6$, and introduces five
  clauses
  \begin{equation}
    \label{eq:1in3}
    R(x, u_1, u_4), R(y, u_2, u_4), R(u_1, u_2, u_3),
    R(u_4, u_5, u_6), R(z, u_5, 0).
  \end{equation}
  The claim is that these five clauses can all be satisfied if and
  only if clause $C$ is satisfied. (Negative literals such as $\neg x$
  can be dealt with by introducing a variable $x'$ and a constraint $x
  + x' = 1$.)

  By the computation of the non-commutative Gr\"{o}bner basis, one can
  verify that the classical gadget already implies the commutativity
  of $x$ and $y$, but cannot guarantee the commutativity for $x, z$ or
  $y, z$. The commutativity gadget can therefore be extracted from the
  classical reduction by considering the first three of the five
  clauses in Eq.~\eqref{eq:1in3}. We summarize this observation in
  Lemma~\ref{lem:1in3}.
\end{proof}

\begin{lemma}\label{lem:1in3}
  The binary constraint system
  \begin{subequations}
    \begin{align}
      x\phantom{_1} + u_1 + u_4 & = 1,\\
      y\phantom{_1} + u_2 + u_4 & = 1,\\
      u_1 + u_2 + u_3 & = 1,
    \end{align}
  \end{subequations}
  forms a commutativity gadget of $x, y$ for \problem*{1-in-3-SAT}.
\end{lemma}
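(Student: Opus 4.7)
The plan is to check the two requirements of a commutativity gadget for $x, y$ in the \problem*{1-in-3-SAT} setting: the \emph{commutativity condition}, that the operators assigned to $x$ and $y$ must commute whenever the three BCS constraints are satisfied, and the \emph{extendibility condition}, that any pair of commuting projections on $x, y$ can be extended to an assignment of $u_1,\ldots,u_4$ that satisfies the gadget.

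For the commutativity condition, I would first use the three linear constraints to eliminate $x$ and $y$ at the operator level, writing $x = 1 - u_1 - u_4$ and $y = 1 - u_2 - u_4$. The commutator then expands as
\begin{equation*}
  [x, y] \;=\; [u_1 + u_4,\, u_2 + u_4]
  \;=\; [u_1, u_2] + [u_1, u_4] + [u_4, u_2],
\end{equation*}
and all three commutators on the right vanish by the local commutativity condition built into the BCS: $u_1, u_4$ appear together in the first constraint, $u_2, u_4$ in the second, and $u_1, u_2$ in the third. This immediately yields $[x, y] = 0$, with no further algebra needed.

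For the extendibility condition, I would start from commuting projections $X, Y$ on a Hilbert space $\H$ and simultaneously diagonalize them to decompose $\H = \bigoplus_{(\epsilon,\delta)\in\{0,1\}^2} \H_{\epsilon,\delta}$, where $X$ acts as $\epsilon I$ and $Y$ as $\delta I$ on $\H_{\epsilon,\delta}$. On each summand the extension problem reduces to the classical one, and a short case analysis produces a Boolean solution in each of the four cases (for instance, $(x,y)=(0,0)$ admits $(u_1,u_2,u_3,u_4)=(0,0,1,1)$, and the remaining three cases are equally quick). Defining each $U_j$ to be the projection onto the direct sum of those $\H_{\epsilon,\delta}$ on which the classical value of $u_j$ is $1$ gives projections that satisfy all three linear constraints, and they are automatically pairwise commuting within each constraint because they are simultaneously block-diagonal in the same decomposition.

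No step in this argument is hard; the only algebraic content is the one-line reduction $[x, y] = [u_1, u_2]$ from step one, which, as the text just before the lemma points out, is precisely the fact revealed by computing the non-commutative Gr\"obner basis of the classical reduction gadget.
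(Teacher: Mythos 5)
Your proof is correct, and the commutativity half takes a genuinely cleaner route than the paper's. The paper assembles three commutator identities of the form $[\text{constraint}, q]$ — each vanishing because the constraint polynomial is zero in any operator assignment — and engineers them so that their sum telescopes to $[x,y]$. You instead solve the first two constraints for $X = I - U_1 - U_4$ and $Y = I - U_2 - U_4$, expand
$[X,Y] = [U_1,U_2] + [U_1,U_4] + [U_4,U_2]$,
and observe that every term vanishes by local commutativity, since each of the pairs $\{u_1,u_4\}$, $\{u_2,u_4\}$, $\{u_1,u_2\}$ shares one of the three clauses. This direct substitution is shorter and makes the design of the gadget transparent: the auxiliary variables are arranged precisely so that every pair of them co-occurs in some constraint. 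The paper's more involved derivation was evidently extracted from a Gr\"obner basis reduction and is not the minimal human-readable certificate. Your extendibility argument (simultaneous diagonalization of $X,Y$ followed by the four-case classical check, with each $U_j$ built as a projection onto a union of joint eigenspaces) is also correct and matches what the paper leaves to the reader. One small slip in your summary paragraph: writing ``$[x,y] = [u_1,u_2]$'' understates the expansion, which has three terms — but since all three vanish for the same reason, nothing in the argument is affected.
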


\begin{proof}
  The extendibility condition is easy to verify and we focus on
  proving the commutativity condition. Compute the commutators
  \begin{equation*}
    \begin{split}
      [x+u_1+u_4-1, -x+u_1+u_3] & = [x, u_3] + [u_4,
      u_3],\\
      [y+u_2+u_4-1, -x] & = [x, y] + [x, u_2],\\
      [u_1+u_2+u_3-1, x+u_4] & = [u_2, x] + [u_3, x] + [u_3, u_4].
    \end{split}
  \end{equation*}
  In the computation, we have used the local commutativity conditions
  such as $[x,u_1] = 0$. Noticing that the right hand side of each
  equation is $0$, we have $[x, y]=0$ by taking the sum of all three
  equations. It is worth mentioning that one can prove the
  commutativity of $x, y$ even without using local commutativity
  conditions, but the proof would be much harder.
\end{proof}

Sometimes, the commutativity gadget is much easier to construct. For
example, if we want to reduce some problem to \problem*{3-SAT}, the
commutativity gadget of $x, y$ is simply a new clause $x\vee y\vee z$
where $z$ is a new variable not occurring in the classical reduction.
Another such example is the \problem{NAE-SAT} constraint. In these
examples, commutativity follows from the local commutativity
condition, and extendibility is guaranteed by the nature of these
constraints.

\begin{theorem}
  \problem*{$k$-SAT} is polynomial-time (Karp) reducible to
  \problem*{3-SAT}.
\end{theorem}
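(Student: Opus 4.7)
The plan is to combine the textbook classical chain reduction of a long clause with the simple commutativity gadget noted just before the theorem. For each original clause $C=\ell_1\vee\cdots\vee\ell_k$ with $k>3$, I introduce fresh auxiliary variables $y_1,\ldots,y_{k-3}$ and replace $C$ by the chain of 3-clauses
\begin{equation*}
(\ell_1\vee\ell_2\vee y_1),\ (\neg y_1\vee\ell_3\vee y_2),\ \ldots,\ (\neg y_{k-3}\vee\ell_{k-1}\vee\ell_k).
\end{equation*}
Since the original clause enforced local commutativity on every pair $\ell_i,\ell_j$ but the chain does not, for every $1\le i<j\le k$ I additionally attach a fresh 3-clause $\ell_i\vee\ell_j\vee z^C_{ij}$ with a new variable $z^C_{ij}$, installing the commutativity gadget for $(\ell_i,\ell_j)$ pointed out in the remarks preceding the theorem. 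The overall blow-up is $O(mk^2)$ clauses and variables, which is polynomial.

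For the forward direction, suppose the $k$-SAT instance has an operator assignment. In each original clause the operators $\hat\ell_1,\ldots,\hat\ell_k$ pairwise commute and can be jointly diagonalized on a common basis. On every joint eigenspace the classical value of $C$ is $1$, so the classical chain reduction supplies a consistent $\{0,1\}$-assignment to the $y_j$'s; lifting these block-diagonally in the common basis produces self-adjoint $\{0,1\}$-operators commuting with the $\hat\ell_i$'s. Setting $\hat z^C_{ij}:=(I-\hat\ell_i)(I-\hat\ell_j)$ gives a projection commuting with $\hat\ell_i,\hat\ell_j$ that satisfies the gadget 3-clause. The resulting tuple meets every condition of a quantum satisfying assignment for the reduced 3-SAT instance.

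For the converse, given an operator assignment of the reduced 3-SAT instance, the gadget clauses immediately yield $[\hat\ell_i,\hat\ell_j]=0$ inside each original clause via the BCS's local commutativity condition. To verify the original clause constraint $\prod_i(I-\hat\ell_i)=0$ (interpreting $\hat\ell_i$ uniformly as the projection onto the literal's true subspace), I proceed by telescoping along the chain. The first chain clause gives $(I-\hat\ell_1)(I-\hat\ell_2)=(I-\hat\ell_1)(I-\hat\ell_2)\hat y_1$, and each subsequent clause $\hat y_{j-1}(I-\hat\ell_{j+1})(I-\hat y_j)=0$ licenses the rewrite $\hat y_{j-1}(I-\hat\ell_{j+1})=\hat y_{j-1}(I-\hat\ell_{j+1})\hat y_j$, inserted inside the running product to produce after $k-3$ steps
\begin{equation*}
\prod_{i=1}^{k-1}(I-\hat\ell_i)=(I-\hat\ell_1)(I-\hat\ell_2)\hat y_1(I-\hat\ell_3)\hat y_2\cdots\hat y_{k-3}(I-\hat\ell_{k-1}).
\end{equation*}
Right-multiplying by $(I-\hat\ell_k)$ produces the tail $\hat y_{k-3}(I-\hat\ell_{k-1})(I-\hat\ell_k)$, which vanishes by the final chain clause, giving the desired identity.

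The main obstacle is bookkeeping in the telescoping step: each rewriting must rearrange only operators that share a common 3-clause (hence commute by local commutativity), because the reduced BCS does not a priori force any further commutations among the $\hat\ell_i,\hat y_j,\hat z^C_{ij}$. Verifying that every substitution respects this discipline is the one place that demands genuine care; once carried out, the telescoping identity completes the reduction, and negative literals are accommodated by the usual device of attaching a fresh variable $\ell'$ constrained by $\ell+\ell'=1$.
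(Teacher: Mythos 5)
Your proof is correct, and it takes a genuinely different route from the paper's. The paper adds a commutativity gadget $x\vee y\vee z$ for \emph{every} pair drawn from $\{x_1,\ldots,x_k,y_1,\ldots,y_{k-3}\}$ not already sharing a chain clause; this forces the entire chain (original literals and auxiliary $y_j$'s alike) to pairwise commute, after which one can pass to a simultaneous eigenbasis and fall back on the classical correctness of the chain reduction. You instead place gadgets only between pairs of \emph{original} literals $\ell_i,\ell_j$, and make no attempt to force the $\hat y_j$'s to commute with anything beyond their own 3-clauses. The converse direction is then recovered not by diagonalization but by the explicit telescoping identity, which uses each chain-clause relation exactly once and, crucially, always rewrites adjacent factors inside a single 3-clause, so no commutations outside what the BCS grants you are ever invoked. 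You correctly flag that this bookkeeping is the only delicate point. Your construction is therefore more economical (roughly $\binom{k}{2}$ extra clauses rather than $\binom{2k-3}{2}-O(k)$) and the converse argument is sharper and fully algebraic, whereas the paper's argument is coarser but shorter to state. One small quibble: the closing remark about introducing a fresh variable $\ell'$ with $\ell+\ell'=1$ to handle negative literals is unnecessary here, since 3-SAT clauses admit negated literals directly, and the gadget $\ell_i\vee\ell_j\vee z^C_{ij}$ enforces commutativity of the underlying variable operators regardless of the signs of $\ell_i,\ell_j$.
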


\begin{proof}
  Following the classical reduction, transform each clause of the form
  $C=\bigvee_{j=1}^k x_j$ to a conjunction of $k-2$ clauses
  \begin{equation*}
    (x_1\vee x_2 \vee y_1) \wedge (\neg y_1 \vee x_3 \vee y_2) \wedge
    \cdots \wedge (\neg y_{k-4} \vee x_{k-2} \vee y_{k-3}) \wedge (\neg
    y_{k-3} \vee x_{k-1} \vee x_k),
  \end{equation*}
  where $y_1, y_2, \ldots, y_{k-3}$ are new variables. To recover
  commutativity lost from the reduction, add a new clause $x\vee y\vee
  z$ for each pair of variables $x, y$ from $x_1, x_2, \ldots, x_k,
  y_1, y_2, \ldots, y_{k-3}$ not occurring in the same clause.
\end{proof}

As \problem*{$k$-COLORING} is a special case of \problem*{$k$-SAT},
\problem*{$k$-COLORING} is polynomial-time reducible to
\problem*{3-COLORING}, although the graph for the latter problem may
have larger size. More generally, the problem of determining the
quantum chromatic number of a graph can be reduced to a bunch of
\problem*{3-COLORING} problems.

The commutativity gadget can also be used to prove \class{NP}-hardness
of many $*$-problems. For example, we have

\begin{theorem}
  \problem*{3-SAT} is \class{NP}-hard.
\end{theorem}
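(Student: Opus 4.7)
The plan is to reduce classical \problem{3-SAT} (which is \class{NP}-hard) to \problem*{3-SAT}. Since every classical satisfying assignment of a BCS yields a $1$-dimensional operator assignment, the direction ``classical sat $\Rightarrow$ quantum sat'' is automatic. The nontrivial content is forcing the converse by \emph{decoupling} all pairs of variables, so that a quantum satisfying assignment on the padded instance is forced to behave classically on the original variables.

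Given an instance $\phi=\bigwedge_{j=1}^m C_j$ of \problem{3-SAT} on variables $x_1,\ldots,x_n$, construct $\phi'$ by adjoining, for every pair $i<k$ such that $x_i$ and $x_k$ do not already co-occur in some original clause, a fresh variable $z_{i,k}$ together with the padding clause $x_i \vee x_k \vee z_{i,k}$. This is exactly the commutativity gadget for \problem*{3-SAT} mentioned just before the theorem: extendibility is trivial (set $z_{i,k}=1$, or observe the clause is satisfied whenever $x_i\vee x_k$ is), and commutativity $[X_i,X_k]=0$ follows immediately from the local commutativity condition on the new clause. The reduction is polynomial time since only $O(n^2)$ new clauses and variables are introduced.

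For soundness, suppose $\phi'$ has a quantum satisfying assignment with projections $X_1,\ldots,X_n$ (and additional operators for the $z$-variables) on a finite-dimensional Hilbert space $\H$. By construction every pair $X_i,X_k$ commutes, either from an original clause containing both variables or from the padding clause. A commuting family of projections on finite-dimensional $\H$ admits a common eigenvector $\ket{v}$ with $X_i\ket{v}=a_i\ket{v}$ and $a_i\in\{0,1\}$. For any original clause $C=\ell_1\vee\ell_2\vee\ell_3$, the polynomial constraint is $I-(I-L_1)(I-L_2)(I-L_3)=I$, where $L_r\in\{X_{i_r},I-X_{i_r}\}$; applying both sides to $\ket{v}$ and using that all operators involved are simultaneously diagonal on $\ket{v}$, we get $(1-b_1)(1-b_2)(1-b_3)=0$ for the corresponding classical values $b_r\in\{0,1\}$, so the clause is satisfied by $(a_1,\ldots,a_n)$. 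Hence $\phi$ is classically satisfiable.

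The main obstacle is only conceptual: one must recognize that the commutativity gadget does double duty here, not merely transporting an operator assignment across a classical reduction (as in Theorem~\ref{thm:3coloring}) but actually \emph{forcing} any quantum satisfying assignment to be diagonalizable on the original variables, which collapses quantum satisfiability to classical satisfiability. Once this observation is made, the argument is routine and relies only on the finite-dimensionality clause in the definition of a quantum satisfying assignment together with simultaneous diagonalization of commuting projections.
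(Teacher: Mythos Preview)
Your proof is correct and follows exactly the same approach as the paper: reduce \problem{3-SAT} to \problem*{3-SAT} by adjoining a commutativity-gadget clause $x_i\vee x_k\vee z_{i,k}$ for every pair of original variables not already sharing a clause, so that all the $X_i$ commute and a common eigenvector yields a classical assignment. Your write-up in fact spells out the simultaneous-diagonalization step that the paper leaves as ``easy to see.''
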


\begin{proof}
  We prove the result by reducing \problem{3-SAT} to \problem*{3-SAT}.
  Let the \problem{3-SAT} instance of $n$ variables $x_1, x_2, \ldots,
  x_n$ be $\bigwedge_{j=1}^m C_j$. For each pair of variables $x_i,
  x_j$ that do not occur in the same clause, introduce a new clause
  $x_i \vee x_j \vee y$, the commutativity gadget, where $y$ is a new
  variable. It is then easy to see that the resulting instance has
  quantum satisfying assignment if and only if the original instance
  has a classical satisfying assignment.
\end{proof}

There are several remarks on the locally commutative reductions in
general. First, they usually also preserve the correctness of the
classical base reductions. Therefore, if we start with a magic BCS,
the resulting instance is also a magic BCS. Since this is true even
without the commutativity gadget, the classical reduction of
\problem{3-SAT} to \problem{3-COLORING} can be used to construct graph
that has quantum $3$-coloring but no classical $3$-coloring from the
magic square game or any other magic BCS games. Second, it is also
usually true that the dimension of the operator assignment is
preserved in the reduction. Therefore, in order to give upper bound on
the entanglement for BCS games, it suffices to work with special types
of constraints in any of the above discussed examples.

In the end of this section, we mention that the $*$-versions of
\problem{2-SAT} and \problem{HORN-SAT} remain in \class{P}.

\begin{theorem}\label{thm:2sat}
  \problem*{2-SAT} is in \class{P}. In fact, it is the same problem as
  its classical counterpart \problem{2-SAT}.
\end{theorem}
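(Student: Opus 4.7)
The plan is to show that a 2-SAT instance has a quantum satisfying assignment if and only if it has a classical satisfying assignment; since classical \problem{2-SAT} is in \class{P}, this yields the theorem. The forward direction is immediate: any $\{0,1\}$-valued classical satisfying assignment is a valid operator assignment on a one-dimensional Hilbert space. For the converse I plan to use the standard implication-graph characterization of \problem{2-SAT}: the formula is unsatisfiable if and only if some variable $x_i$ admits both a directed path $x_i \to \cdots \to \neg x_i$ and a directed path $\neg x_i \to \cdots \to x_i$ in the graph whose edges are $\neg a \to b$ and $\neg b \to a$ for each clause $a \vee b$.

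The core of the argument is a chain lemma that lifts such an implication path to an operator identity. For each literal $\ell$ write $L$ for the corresponding projection, so $L = X_j$ or $L = I - X_j$. The clause $\neg \ell_{j-1} \vee \ell_j$ supplies both the constraint $L_{j-1}(I - L_j) = 0$ and the local commutativity relation $L_{j-1} L_j = L_j L_{j-1}$, which together give $L_{j-1} L_j = L_{j-1}$. I would then prove by induction on $k$ that for any implication chain $\ell_0 \to \ell_1 \to \cdots \to \ell_k$ the associated projections satisfy $L_0 L_k = L_0$. The inductive step is
\begin{equation*}
L_0 L_{k+1} = (L_0 L_k) L_{k+1} = L_0 (L_k L_{k+1}) = L_0 L_k = L_0,
\end{equation*}
using the inductive hypothesis and $L_k L_{k+1} = L_k$. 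Crucially, every multiplication here is between two operators that appear together in a single original 2-clause, so only the commutativity relations guaranteed by the BCS definition are used.

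With the chain lemma in hand, suppose an operator assignment exists while the instance is classically unsatisfiable. Applying the lemma to a chain $x_i \to \cdots \to \neg x_i$ with $L_0 = X_i$ and $L_k = I - X_i$ gives $X_i (I - X_i) = X_i$; but $X_i (I - X_i) = X_i - X_i^2 = 0$, forcing $X_i = 0$. Symmetrically, the reverse chain forces $I - X_i = 0$, and these together imply $I = 0$, which is impossible in any nontrivial Hilbert space. Hence no operator assignment can exist, so quantum satisfiability coincides with classical satisfiability for 2-SAT instances.

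The main obstacle I anticipate is ensuring that the chain lemma does not tacitly require commutativity between pairs of variables that need not commute, such as $X_i$ and an intermediate $L_j$ coming from a non-adjacent clause. The parenthesization in the displayed computation is the heart of the matter: every multiplication is either trivial associativity or between two operators from a single clause. Beyond that, the remainder of the proof is the purely combinatorial implication-graph analysis of classical 2-SAT, so nothing further needs to be invented.
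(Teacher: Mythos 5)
Your proposal is correct and follows essentially the same route as the paper: both invoke the Aspvall--Plass--Tarjan implication-graph criterion and derive a contradiction by following a chain from $x_i$ to $\neg x_i$ and back. The paper phrases the chain step as transitivity of subspace containment (range of $L_j$ contained in range of $L_{j+1}$), which for projections is the same as your operator identity $L_0 L_k = L_0$; your explicit inductive chain lemma is a welcome clarification of why no spurious commutativity is ever needed, but it is not a different proof.
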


\begin{proof}
  Any \yes-instance of \problem{2-SAT} is also a \yes-instance of
  \problem*{2-SAT}. So it suffice to prove that any \no-instance of
  \problem{2-SAT} is \no-instance of \problem*{2-SAT}. For any
  \problem{2-SAT} instance, its implication graph~\cite{APT79} is a
  directed graph with all variables and their negations as its
  vertices. For each clause $x\vee y$, add two edges, one from $\neg x
  $ to $y$, the other from $\neg y$ to $x$. Consider the implication
  graph of a \no-instance. By the result of~\cite{APT79}, there will
  be a strongly connected component (subgraphs such that there is a
  path form each vertex in the component to every other vertex in the
  component) that contains a literal and its negation.

  Suppose, on the contrary, that there is a quantum assignment,
  assigning operator $X$ to variable $x$. To each vertex of graph
  labeled by a variable $x$, assign the $1$-eigenspace of $X$, and to
  vertex labeled by $\neg x$, assign the $0$-eigenspace of $X$. The
  constraints of \problem*{2-SAT} then translate to the following: for
  each directed edge $(x,y)$ in the implication graph, the subspace
  assigned to vertex $x$ is contained in the subspace assigned to $y$.
  Let $x$ and $\neg x$ be the pair of vertices in a strongly connected
  component~\cite{APT79}. We will then have that the $1$-eigenspace of
  $X$ is contained in its $0$-eigenspace and vice versa, which is a
  contradiction.
\end{proof}

The above theorem is a slight generalization of the fact that a graph
is quantum $2$-colorable if and only if it is classical $2$-colorable.
Also, it makes clear that it is important to have constraints like
$x_{v,0}+x_{v,1}+x_{v,2} = 1$ in, for example, the quantum
$3$-coloring problem, as all other constraints are indeed
\problem{2-SAT} constraints.

The \problem{HORN-SAT} problems are satisfiability problems where each
constraint is a Horn clause (a disjunction with at most one positive
literal). We have the following theorem for its $*$-version.

\begin{theorem}\label{thm:hornsat}
  \problem*{HORN-SAT} is in \class{P}. In fact, it is the same problem
  as its classical counterpart \problem{HORN-SAT}.
\end{theorem}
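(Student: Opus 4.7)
The plan is to imitate the classical HORN-SAT unit-propagation algorithm at the operator level and show that every variable forced to true by propagation must be assigned the identity operator in any quantum satisfying assignment. First I would rewrite each Horn clause as a polynomial constraint in the $\{0,1\}$ domain: a clause $\neg x_{i_1} \vee \cdots \vee \neg x_{i_k} \vee y$ becomes $x_{i_1} x_{i_2} \cdots x_{i_k}(1-y) = 0$, a purely negative clause $\neg x_{i_1} \vee \cdots \vee \neg x_{i_k}$ becomes $x_{i_1}\cdots x_{i_k} = 0$, and a unit positive clause $y$ becomes $y = 1$. Since all variables appearing together in one clause must commute by condition~(c), each such polynomial evaluates unambiguously on an operator assignment, and under commutativity the ``self-adjointified'' constraint $C_j + C_j^*$ is zero iff the original product constraint is zero.

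Next I would recall the classical algorithm. Let $T_0$ consist of all variables appearing in unit positive clauses, define $T_{n+1} = T_n \cup \{y : \text{some clause } \neg x_{i_1} \vee \cdots \vee \neg x_{i_k} \vee y \text{ has } x_{i_1},\ldots,x_{i_k} \in T_n\}$, and set $T = \bigcup_n T_n$. A standard argument shows the classical instance is unsatisfiable exactly when some purely negative clause has all of its variables in $T$. The core claim is: in any operator assignment $\{X_j\}$ and for every $j \in T$, one has $X_j = I$. I would prove it by induction on the least $n$ with $j \in T_n$. The base case is immediate from the unit positive clause constraint $Y = I$. For the inductive step, consider a clause that adds $y$ to $T_{n+1}$; by the inductive hypothesis $X_{i_1} = \cdots = X_{i_k} = I$, so the operator constraint $X_{i_1}\cdots X_{i_k}(I-Y) = 0$ collapses to $I - Y = 0$, giving $Y = I$.

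Now suppose the classical instance is unsatisfiable and fix a purely negative clause $\neg x_{i_1} \vee \cdots \vee \neg x_{i_k}$ with every $x_{i_j} \in T$. Its operator constraint reads $X_{i_1}\cdots X_{i_k} = 0$, yet the claim forces every $X_{i_j} = I$, so $I = 0$ on the nonzero finite-dimensional Hilbert space $\H$, a contradiction. Hence no operator assignment exists, establishing the nontrivial direction of the equivalence; the converse is immediate because any classical satisfying assignment is trivially an operator assignment. Consequently the classical polynomial-time unit-propagation algorithm decides \problem*{HORN-SAT} as well, and the two problems coincide. There is no real obstacle beyond the careful use of local commutativity to push classical unit propagation through to the operator level; the inductive forcing of $X_j = I$ for $j \in T$ is the heart of the argument.
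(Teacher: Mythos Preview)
Your proof is correct and takes essentially the same approach as the paper: both lift classical Horn unit-propagation to the operator level to show that a classical \no-instance admits no quantum assignment. The paper phrases propagation as \emph{pebbling} on the Dowling--Gallier graph $G_H$ and tracks $1$-eigenspaces (showing the intersection of source subspaces is contained in the target subspace, ultimately giving $\H \subseteq \{0\}$), whereas you track the forced set $T$ directly and show $X_j = I$ for all $j \in T$ to reach $I = 0$; these are equivalent packagings of the same induction, with yours being slightly more self-contained since it does not invoke the graph formalism.
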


\begin{proof}
  As in the proof of Theorem~\ref{thm:2sat}, it suffices to prove that
  any \no-instance of \problem{HORN-SAT} is also a \no-instance of
  \problem*{HORN-SAT}. Following the convection in~\cite{DG84}, for a
  \problem{HORN-SAT} instance $H = \bigwedge_{j=1}^m C_j$ of $n$
  variables, define a labeled directed graph $G_H$ with $n+2$ vertices
  (a vertex for each variable, a vertex for \true, and a vertex for
  \false). For $j=1, 2, \ldots, m$, construct the edges of $G_H$ as
  follows: (a) If $C_j$ is a positive literal $x$, add an edge from
  \true to $x$ labeled $j$, (b) if $C_j$ is of the form $\neg x_1 \vee
  \cdots \vee \neg x_k$, add $k$ edges from $x_1, \ldots, x_k$ to
  \false labeled $j$, (c) if $C_j$ is of the form $\neg x_1 \vee
  \cdots \vee \neg x_k \vee y$, add $k$ edges from $x_1, \ldots, x_k$
  to $y$ labeled $j$. In the graph $G_H$, there is a {\it pebbling\/}
  of a vertex $y$ from a set $X$ of vertices if either $y$ belongs to
  $X$ or, for some label $j$, there are pebblings of $x_1, \ldots x_k$
  from $X$ and $x_1, \ldots, x_k$ are the sources of all incoming
  edges to $y$ labeled $j$. Theorem 3 and its corollary of~\cite{DG84}
  then states that $H$ is unsatisfiable if and only if there is a
  pebbling of \false from $\{ \true \}$ in $G_H$.

  Let $H$ be a \no-instance of \problem{HORN-SAT} and assume on the
  contrary that there is a quantum satisfying assignment. For a
  variable $x$ in $H$, let $X$ be the operator assigned to it. We
  assign a subspace to each vertex of the graph $G_H$ in the following
  way. For vertex \false, assign the $0$-dimensional subspace $\{0\}$,
  for vertex \true, assign the whole Hilbert space $\H$ of the
  assignment, and for each vertex $x$, assign the $1$-eigenspace of
  $X$. The constraints of the \problem*{HORN-SAT} problem then
  translate to that, for each $1\le j \le m$, the intersection of the
  subspaces assigned to the sources of edges labeled by $j$ is
  contained in the subspace assigned to the target of these edges. By
  induction, it is easy to show that if there is a pebbling of $y$
  from $X$, then the intersection of the subspaces assigned to the
  vertices in $X$ is contained in that assigned to $y$. We thus have a
  contradiction that $\H$ is contained in $\{0\}$ by the
  characterization of~\cite{DG84}.
\end{proof}

From the above discussions, the $*$-versions of the few
polynomial-time solvable satisfiability problems singled out in
Schaefer's dichotomy classification theorem~\cite{Sch78} are the same
as their classical counterparts except the affine case. This makes the
determination of the complexity of the parity BCS problems
($*$-version of the affine case) particularly interesting. Currently,
the problem is not even known to be decidable. It is also the only
case where the $*$-version is indeed different from the classical
problem, and the difference, if measured by the amount of entanglement
involved, is big as shown in the next section.

\section{A Parity BCS Game that Requires a Large Amount of
  Entanglement}

\label{sec:lower}

We have seen in Section~\ref{sec:reductions} that the commutativity
gadgets played a crucial role in constructing the $*$-reductions. In
this section, we will employ a variant of it, called the
anti-commutativity gadget, to construct a BCS game that requires a
large amount of entanglement in its prefect strategies.

It turns out that the magic square game is a naturally born
anti-commutativity gadget. The following two simple observations about
magic square system correspond to the anti-commutativity condition and
the extendibility condition of the gadget respectively.

\begin{lemma}\label{lem:anticommute}
  If $X_1, X_2, \ldots, X_9$ form an operator assignment of the magic
  square constraint system, then the following anti-commutativity
  relation holds:
  \begin{equation*}
    X_2X_4=-X_4X_2.
  \end{equation*}
  In fact, for any two operators not in the same row or column of the
  square are always anti-commuting.
\end{lemma}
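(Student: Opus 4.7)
The plan is to exploit the fact that the row~3 and column~3 constraints both determine the operator $X_9$, and that these two determinations involve different orderings of the same four ``generators'' $X_1, X_2, X_4, X_5$. The sign discrepancy forced by the odd-parity constraint in column~3 is exactly what turns commutativity into anti-commutativity.

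First, I would use local commutativity together with $X_j^2 = I$ to solve each of the four non-corner/non-center operators in terms of the corner operators $X_1, X_2, X_4, X_5$. Row~1 gives $X_1 X_2 X_3 = I$ with $X_1, X_2, X_3$ pairwise commuting and involutive, so $X_3 = X_1 X_2$. In the same way, row~2 yields $X_6 = X_4 X_5$, column~1 yields $X_7 = X_1 X_4$, and column~2 yields $X_8 = X_2 X_5$.

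Next, I would express $X_9$ in two ways. Row~3 says $X_7 X_8 X_9 = I$ with $X_7, X_8, X_9$ pairwise commuting, which gives
\begin{equation*}
  X_9 \;=\; X_7 X_8 \;=\; (X_1 X_4)(X_2 X_5).
\end{equation*}
Column~3 says $X_3 X_6 X_9 = -I$ with $X_3, X_6, X_9$ pairwise commuting, which gives
\begin{equation*}
  X_9 \;=\; -X_3 X_6 \;=\; -(X_1 X_2)(X_4 X_5).
\end{equation*}
Equating the two expressions and left-multiplying by $X_1$ and right-multiplying by $X_5$ (both involutive) reduces immediately to $X_4 X_2 = -X_2 X_4$, which is the claimed identity.

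The main subtlety — rather than a real obstacle — is the ``in fact'' sentence that extends anti-commutativity to every pair of operators not in the same row or column. I would handle this either by repeating the argument above for each such pair (there are only finitely many orbits to check, and each pair always sits ``opposite'' to some constraint that involves the $-I$ on column~3 when traced through the row/column substitutions), or by invoking the row/column relabeling symmetry of the magic square, taking care to track how a given pair is carried to $(X_2, X_4)$ by a permutation of the $3 \times 3$ grid that preserves the parity pattern (possibly after first moving the $-1$ to a different row or column via an equivalent constraint system obtained by multiplying suitable equations by the $-I$ factor). Either route is routine once the model computation for the $(X_2, X_4)$ pair is in place.
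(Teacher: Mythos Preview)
Your proposal is correct and follows essentially the same route as the paper: expressing $X_3, X_6, X_7, X_8$ in terms of $X_1, X_2, X_4, X_5$, equating the two expressions $X_9 = X_7 X_8 = -X_3 X_6$, and cancelling to obtain $X_4 X_2 = -X_2 X_4$; the paper likewise dispatches the ``in fact'' clause with a one-line ``similarly.'' (Minor quibble: your phrase ``corner operators $X_1, X_2, X_4, X_5$'' is a slip, since those are the top-left $2\times 2$ block rather than the corners, but this does not affect the argument.)
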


\begin{proof}
  From $X_1X_2X_3 = \1$ and $X_3^2 = \1$, we have that $X_3 = X_1X_2$.
  Similarly, $X_6 = X_4X_5$, $X_7 = X_1X_4$, $X_8 = X_2X_5$ and
  $X_7X_8 = X_9 = - X_3X_6$. It then follows by substitution that
  $(X_1X_4)(X_2X_5) = -(X_1X_2)(X_4X_5)$, which implies that $X_2X_4 =
  -X_4X_2$. The other anti-commutativity relations can be shown
  similarly.
\end{proof}

\begin{lemma}\label{lem:extension}
  If $A, B$ are any two self-adjoint bounded linear operators on
  Hilbert space $\H$ that satisfy $AB = -BA$, and $A^2 = B^2 = \1$,
  then there exists an operator assignment $Y_1, Y_2, \ldots, Y_9$ on
  $\H_2\otimes \H$ for the magic square game such that $Y_2 = \1
  \otimes A$, $Y_4 = \1 \otimes B$, where $\H_2$ is the two
  dimensional complex Hilbert space.
\end{lemma}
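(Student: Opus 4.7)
The plan is to exhibit an explicit operator assignment $Y_1,\ldots,Y_9$ for the magic-square constraint system of Eq.~\eqref{eq:magic-square-prod} on $\H_2\otimes\H$ by lifting the standard Mermin--Peres Pauli solution in Fig.~\ref{fig:magic-square-solution} so that its second tensor slot uses the given $A,B$. The hypotheses on $A$ and $B$ are precisely the Clifford relations $A^2=B^2=\1$, $AB=-BA$, which say that $A,B$ play on $\H$ exactly the role that two anti-commuting Pauli operators play on $\H_2$.

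The first step is to record that $C\mathrel{:=}iAB$ is self-adjoint, squares to $\1$, and anti-commutes with both $A$ and $B$; each is a one-line consequence of $A^*=A$, $B^*=B$, $A^2=B^2=\1$, and $AB=-BA$. Next I would simply take
\begin{equation*}
  \begin{array}{lll}
    Y_1 = Z\otimes\1, & Y_2 = \1\otimes A, & Y_3 = Z\otimes A,\\
    Y_4 = \1\otimes B, & Y_5 = X\otimes\1, & Y_6 = X\otimes B,\\
    Y_7 = Z\otimes B, & Y_8 = X\otimes A, & Y_9 = XZ\otimes AB,
  \end{array}
\end{equation*}
so that $Y_2=\1\otimes A$ and $Y_4=\1\otimes B$ by construction.

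What remains is routine verification of the three operator-assignment conditions. Self-adjointness and $Y_j^2=\1$ are immediate for $Y_1,\ldots,Y_8$; for $Y_9$ the two sign flips $(XZ)^*=-XZ$ and $(AB)^*=-AB$ cancel, as do $(XZ)^2=-\1$ and $(AB)^2=-\1$. Pairwise commutativity inside each row and column either is automatic because one tensor factor is the identity, or follows from the sign-cancellation $\alpha\otimes\beta=(-\alpha)\otimes(-\beta)$ on the third row and column, e.g.\ $Y_7Y_8 = ZX\otimes BA = XZ\otimes AB = Y_8Y_7$. The five even-parity row and column products telescope to $\1\otimes\1$, while column three gives $Y_3Y_6Y_9 = Z^2\otimes ABAB = \1\otimes(-\1) = -\1$, matching the lone odd constraint, precisely because $ABAB = -A^2B^2 = -\1$.

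The only conceptual input is the observation that the anti-commutation of $A$ and $B$ is exactly what allows the substitution $(X,Z)\rightsquigarrow(B,A)$ in the second tensor factor, so there is no real obstacle beyond the bookkeeping above. In particular, the ancillary factor $\H_2$ is unavoidable: it supplies the anti-commuting pair $X,Z$ needed to pair up with $B,A$ so that the sign in the last column comes out correctly.
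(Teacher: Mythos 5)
Your proof is correct and follows essentially the same route as the paper: both exhibit an explicit lift of the Mermin--Peres Pauli square to $\H_2\otimes\H$, placing $\1\otimes A$ and $\1\otimes B$ in cells $2$ and $4$ and building the remaining entries from tensor products of Pauli operators with $A$, $B$, and a product of the two. The only cosmetic difference is that the paper packages the ninth entry as $Y\otimes C$ with the self-adjoint $C=iAB$, whereas you write it as $XZ\otimes AB$ (which differs only by the scalar $-1 = (-i)\cdot i$ and handles self-adjointness by cancelling two sign flips across the tensor factors).
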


\begin{proof}
  Choose $Y_j$'s as in Fig.~\ref{fig:extension} where $C = iAB$ and
  $I,X,Y,Z$ are the identity and Pauli matrices. It is easy to verify
  that they indeed form an operator assignment for the magic square
  BCS. For example, in the last column, we have
  \begin{equation*}
    \begin{split}
      & (X \otimes A) (Z \otimes B) (Y \otimes C) = iXZY \otimes ABAB
      = - \1.
    \end{split}
  \end{equation*}
  \begin{figure}[htbp]
    \centering
    \includegraphics{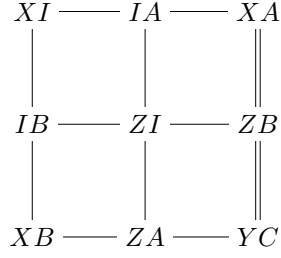}
    \caption{Extension of an anti-commuting pair to a magic square}
    \label{fig:extension}
  \end{figure}
\end{proof}

We now construct a BCS game as follows. The basic idea is to glue a
bunch of magic squares and use the anti-commutativity relations
implicit in the magic square to form the defining relations of the
Clifford algebra. Let $G$ be a complete graph with vertex set
$V=\{1,2,\ldots,N\}$. To each vertex $j\in V$, we assign a variable
$x_j$, and with each edge $e=(j,k)$ for $j<k$, we associate seven
variables $y_1^{(e)}, y_2^{(e)}, \ldots, y_7^{(e)}$. They form the
complete set of variables of the BCS. The total number of variables is
therefore of order $N^2$. For each edge $e=(j, k)$, we form a magic
square using $x_j, x_k,$ and $y_1^{(e)}, y_2^{(e)}, \ldots, y_7^{(e)}$
as in Fig.~\ref{fig:edge}. That is, we have $y_1^{(e)} x_j y_2^{(e)} =
1$, $x_k y_3^{(e)} y_4^{(e)} = 1$, etc., in the constraints for each
edge $e=(j, k)$. The number of constraints is therefore also of order
$N^2$. Let us call the BCS game thus obtained the Clifford BCS game of
rank $N$.

\begin{figure}[htbp]
  \centering
  \includegraphics{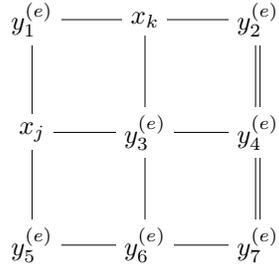}
  \caption{Magic square for edge $e$}
  \label{fig:edge}
\end{figure}

\begin{theorem}
  The Clifford BCS game of rank $N$ has a perfect strategy and
  requires at least $\lfloor \frac{N}{2} \rfloor$ shared EPR pairs for
  Alice and Bob to play perfectly.
\end{theorem}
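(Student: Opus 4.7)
The plan is to establish the two claims separately: existence of a perfect strategy via an explicit Clifford algebra construction, and the entanglement lower bound via the representation theory of complex Clifford algebras.

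For the perfect strategy, by the characterization of~\cite{CM12} it suffices to exhibit a quantum satisfying assignment of the Clifford BCS. I would take $\Gamma_1,\ldots,\Gamma_N$ to be the standard generators of the complex Clifford algebra acting on $\H_\Gamma\cong(\mathbb{C}^2)^{\otimes\lceil N/2\rceil}$: self-adjoint involutions that pairwise anti-commute. Setting $X_j=I\otimes\Gamma_j$ on $\mathbb{C}^2\otimes\H_\Gamma$, each pair $X_j, X_k$ appearing together in the magic square of edge $e=(j,k)$ automatically satisfies $X_j^2=X_k^2=I$ and $X_jX_k=-X_kX_j$, so Lemma~\ref{lem:extension} provides self-adjoint involutions $Y_1^{(e)},\ldots,Y_7^{(e)}$ on $\mathbb{C}^2\otimes\H_\Gamma$ completing $X_j, X_k$ to a valid operator assignment of that magic square. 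Because the auxiliary variables $y_i^{(e)}$ for distinct edges are distinct and no local-commutativity relation couples operators belonging to different edges, these per-edge assignments glue together into a single quantum satisfying assignment of the entire Clifford BCS.

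For the lower bound, I would start from any perfect strategy and consider the operators $X_1,\ldots,X_N$ on Alice's local Hilbert space $\H_A$ assigned to the vertex variables by the associated operator assignment (via~\cite{CM12}). In the magic square attached to edge $e=(j,k)$, the variables $x_j, x_k$ occupy positions in different rows and different columns, so Lemma~\ref{lem:anticommute} forces $X_jX_k=-X_kX_j$. Since $G$ is complete, this anti-commutation relation holds for every $j\ne k$, and together with $X_j^*=X_j$ and $X_j^2=I$ it says precisely that the $X_j$ furnish a $*$-representation of the complex Clifford algebra $\mathrm{Cl}_N(\mathbb{C})$ on $\H_A$.

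The proof is then completed by invoking the Artin--Wedderburn description: $\mathrm{Cl}_N(\mathbb{C})$ is isomorphic to $M_{2^{N/2}}(\mathbb{C})$ when $N$ is even and to $M_{2^{(N-1)/2}}(\mathbb{C})\oplus M_{2^{(N-1)/2}}(\mathbb{C})$ when $N$ is odd, so in both cases every irreducible $*$-representation has dimension exactly $2^{\lfloor N/2\rfloor}$ and every nonzero representation decomposes as a direct sum of such irreducibles. The constraint $X_j^2=I$ excludes the zero representation, which yields $\dim\H_A\ge 2^{\lfloor N/2\rfloor}$; combined with the fact (emphasized earlier in the paper) that a perfect strategy for a BCS game may be assumed to use a maximally entangled state, this dimension equals the Schmidt rank, giving at least $\lfloor N/2\rfloor$ shared EPR pairs. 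The principal technical obstacle is extracting the Clifford dimension bound for an arbitrary, possibly reducible $*$-representation rather than just an irreducible one; the Artin--Wedderburn structure theorem is what makes this step uniform across the parity of $N$.
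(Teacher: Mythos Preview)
Your proposal is correct and follows essentially the same approach as the paper: construct the vertex operators as Clifford generators and extend per edge via Lemma~\ref{lem:extension} for existence, then invoke Lemma~\ref{lem:anticommute} and the representation theory of the Clifford algebra for the dimension lower bound. Your write-up is in fact slightly more explicit than the paper's (spelling out the Artin--Wedderburn structure of $\mathrm{Cl}_N(\mathbb{C})$ and the Schmidt-rank link to EPR pairs), and your use of $\lceil N/2\rceil$ qubits in the construction is harmlessly non-optimal compared to the paper's $\lfloor N/2\rfloor$.
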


\begin{proof}
  We first prove that the Clifford BCS game has a perfect strategy.
  For this purpose, we first choose operators $X_j$ for each vertex
  variable $x_j$, $j\in V$, such that $X_j^2=\1$ and $X_jX_k=-X_kX_j$.
  By the well-known representation theory of the Clifford
  algebra~\cite[Subsection 17.3]{Kir76}, these operators exist on a
  $2^{\lfloor \frac{N}{2} \rfloor}$ dimensional Hilbert space $\H$. In
  fact, one can even choose the operators to be tensor product of
  Pauli operators. Then a repeated application of
  Lemma~\ref{lem:extension} to the magic squares on all edges gives an
  operator assignment on $\H_2\otimes \H$. Therefore, there exists a
  perfect strategy for the Clifford BCS game.

  To prove the lower bound on entanglement, it suffices to observe
  that for any operator assignment of the Clifford game, the operator
  assignment to the vertex variables $x_j$'s forms a representation of
  the Clifford algebra of rank $N$ by Lemma~\ref{lem:anticommute}. It
  follows from representation theory of the Clifford
  algebra~\cite[Subsection 17.3]{Kir76} that the Hilbert space on
  which these operators act is at least $2^{\lfloor \frac{N}{2}
    \rfloor}$ dimensional. This already implies that any perfect
  strategy requires at least $\lfloor \frac{N}{2} \rfloor$ shared EPR
  pairs.
\end{proof}

We remark that the above theorem establishes a bound of
$\Omega(\sqrt{n})$ shared EPR pairs in term of the number of variables
$n = \Theta(N^2)$. One can reuse many of the edge variables in the
construction, but the order will always be $N^2$. The above
construction is straightforward and it is conceivable that more
contrived designs can give better entanglement bounds. A possible
approach is to consider BCS game derived not from a complete graph,
but from some arbitrary graph instead. More concretely, let $G=(V,E)$
be a graph. For each $j\in V$ assign a self-adjoint operator $X_j$
such that $X_j^2 = \1$, and for each $(j,k) \in E$, require that
$X_jX_k=-X_kX_j$. One can then turn these constraints into a BCS in a
similar way. The difficulty, however, is that we do not know a lower
bound on the dimension for the operators $X_j$'s, which satisfy only a
partial set of anti-commuting relations. An easy upper bound on this
dimension is $2^{O(\chi(G))}$ where $\chi(G)$ is the chromatic number
of graph $G$. Hence, such an improvement could only be possible for
certain sparse graphs with high chromatic numbers.

In the above discussion of parity BCS games, the commutativity and
anti-commutativity conditions played an important role. However, we
mention that parity BCS games designed based on a {\it complete\/} set
of pairwise commuting or anti-commuting variables will need no more
than $n$ EPR pairs for perfect strategies to exist. The reason is
that, in this case, the group generated by the variables $x_1, x_2,
\ldots, x_n$ will be a finite group whose elements are all in the form
$\pm x_1^{e_1} x_2^{e_2} \ldots x_n^{e_n}$ for $e_j\in \{0, 1\}$. By
considering the regular representation of the group, we know that
there is operator assignment of dimension at most $2^n$. It turns out
that all known magic parity BCS games have a Pauli solution. For all
such games, the upper bound of $n$ EPR pairs applies as the existence
of Pauli solutions indicates that certain complete set of
commutativity and anti-commutativity relations is consistent with the
constraint system.

Finally, we explain how one can reduce the number of occurrences of
variables in BCS games. In the Clifford BCS game defined above, there
are variables (those correspond to the vertices for example) that
occur in $\Omega(N)$ clauses. However, it is easy to lower this number
all the way down to three in all BCS games by introducing several
auxiliary variables and constraints. Consider for example a binary
tree. Assign a variable $z_j$ to each vertex of the tree, and let each
edge of the tree to represent an even parity constraint. That is, if
$z_j$ and $z_k$ are two variables of neighboring vertices, then
require that $z_j \oplus z_k = 0$. This way, each vertex occurs in at
most three clauses, and each leaf occurs in only one clause. It is
easy to see that the variables in the tree must all be equal.
Therefore, we can construct such a binary tree of suitable size for
each variable in the original BCS game that occurs in more than three
clauses and replace the occurrences of this variable with the leaf
variables of the tree.

\section*{Acknowledgments}

The author acknowledges helpful discussions with Richard Cleve and
John Watrous on related problems. This work is supported by NSERC and
ARO.

\bibliographystyle{acm}

\bibliography{bcs-reduction}

\begin{thebibliography}{10}

\bibitem{AN02}
{\sc Aharonov, D., and Naveh, T.}
\newblock {Quantum NP - A Survey}.
\newblock arXiv:quant-ph/0210077, 2002.

\bibitem{Ara04}
{\sc Aravind, P.~K.}
\newblock Quantum mysteries revisited again.
\newblock {\em American Journal of Physics 72}, 10 (2004), 1303.

\bibitem{Ark12}
{\sc Arkhipov, A.}
\newblock Extending and characterizing quantum magic games.
\newblock arXiv:1209.3819, 2012.

\bibitem{APT79}
{\sc Aspvall, B., Plass, M.~F., and Tarjan, R.~E.}
\newblock {A linear-time algorithm for testing the truth of certain quantified
  boolean formulas}.
\newblock {\em Information Processing Letters 8}, 3 (1979), 121--123.

\bibitem{BFL91}
{\sc Babai, L., Fortnow, L., and Lund, C.}
\newblock Non-deterministic exponential time has two-prover interactive
  protocols.
\newblock {\em Computational Complexity 1}, 1 (1991), 3--40.

\bibitem{BJK04}
{\sc Bar-Yossef, Z., Jayram, T.~S., and Kerenidis, I.}
\newblock Exponential separation of quantum and classical one-way communication
  complexity.
\newblock In {\em Proceedings of the thirty-sixth annual ACM symposium on
  Theory of computing\/} (New York, NY, USA, 2004), STOC '04, ACM,
  pp.~128--137.

\bibitem{Bel64}
{\sc Bell, J.}
\newblock {On the Einstein Podolsky Rosen Paradox}.
\newblock {\em Physics 1}, 3 (1964), 195--200.

\bibitem{BGKW88}
{\sc Ben-Or, M., Goldwasser, S., Kilian, J., and Wigderson, A.}
\newblock Multi-prover interactive proofs: how to remove intractability
  assumptions.
\newblock In {\em Proceedings of the twentieth annual ACM symposium on Theory
  of computing\/} (1988), STOC '88, ACM, pp.~113--131.

\bibitem{BBT05}
{\sc Brassard, G., Broadbent, A., and Tapp, A.}
\newblock Quantum pseudo-telepathy.
\newblock {\em Foundations of Physics 35\/} (2005), 1877--1907.

\bibitem{BCT99}
{\sc Brassard, G., Cleve, R., and Tapp, A.}
\newblock Cost of exactly simulating quantum entanglement with classical
  communication.
\newblock {\em Physical Review Letters 83\/} (1999), 1874--1877.

\bibitem{Bra06}
{\sc Bravyi, S.}
\newblock {Efficient algorithm for a quantum analogue of 2-SAT}.
\newblock arXiv:quant-ph/0602108, 2006.

\bibitem{CMN+07}
{\sc Cameron, P.~J., Montanaro, A., Newman, M.~W., Severini, S., and Winter,
  A.}
\newblock On the quantum chromatic number of a graph.
\newblock R81.
\newblock See also arXiv:quant-ph/0608016.

\bibitem{CHSH69}
{\sc Clauser, J.~F., Horne, M.~A., Shimony, A., and Holt, R.~A.}
\newblock Proposed experiment to test local hidden-variable theories.
\newblock {\em Physical Review Letters 23\/} (1969), 880--884.

\bibitem{CHTW04}
{\sc Cleve, R., Hoyer, P., Toner, B., and Watrous, J.}
\newblock Consequences and limits of nonlocal strategies.
\newblock In {\em Proceedings of the 19th Annual IEEE Conference on
  Computational Complexity\/} (Washington, DC, USA, 2004), CCC '04, IEEE
  Computer Society, pp.~236--249.

\bibitem{CM12}
{\sc Cleve, R., and Mittal, R.}
\newblock Characterization of binary constraint system games.
\newblock arXiv:1209.2729, 2012.

\bibitem{CSUU08}
{\sc Cleve, R., Slofstra, W., Unger, F., and Upadhyay, S.}
\newblock {Perfect Parallel Repetition Theorem for Quantum Xor Proof Systems}.
\newblock {\em Computational Complexity 17}, 2 (2008), 282--299.

\bibitem{CK10}
{\sc Cohen, A.~M., and Knopper, J.}
\newblock {\em {GAP package GBNP: computing Gr\"{o}bner bases of noncommutative
  polynomials}}, 2010.
\newblock Available from http://www.gap-system.org/Packages/gbnp.html.

\bibitem{Coo71}
{\sc Cook, S.~A.}
\newblock The complexity of theorem-proving procedures.
\newblock In {\em Proceedings of the third annual ACM symposium on Theory of
  computing\/} (New York, NY, USA, 1971), STOC '71, ACM, pp.~151--158.

\bibitem{DG84}
{\sc Dowling, W.~F., and Gallier, J.~H.}
\newblock {Linear-time algorithms for testing the satisfiability of
  propositional Horn formulae}.
\newblock {\em The Journal of Logic Programming 1}, 3 (1984), 267--284.

\bibitem{FL92}
{\sc Feige, U., and Lov\'{a}sz, L.}
\newblock Two-prover one-round proof systems: their power and their problems.
\newblock In {\em Proceedings of the twenty-fourth annual ACM symposium on
  Theory of computing\/} (1992), STOC '92, ACM, pp.~733--744.

\bibitem{GN13}
{\sc Gosset, D., and Nagaj, D.}
\newblock {Quantum 3-SAT is QMA$_1$-complete}.
\newblock arXiv:1302.0290, 2013.

\bibitem{Gre99}
{\sc Green, E.~L.}
\newblock Noncommutative gr\"{o}bner bases, and projective resolutions.
\newblock In {\em Computational Methods for Representations of Groups and
  Algebras}, P.~Dr\"{a}xler, C.~Ringel, and G.~Michler, Eds., vol.~173 of {\em
  Progress in Mathematics}. 1999, pp.~29--60.

\bibitem{Has01}
{\sc H{\aa}stad, J.}
\newblock Some optimal inapproximability results.
\newblock {\em Journal of the ACM 48}, 4 (2001), 798--859.

\bibitem{Hel13}
{\sc Held, C.}
\newblock {The Kochen-Specker Theorem}.
\newblock In {\em The Stanford Encyclopedia of Philosophy}, E.~N. Zalta, Ed.,
  spring 2013~ed. 2013.

\bibitem{HS97}
{\sc Helton, J.~W., and Stankus, M.}
\newblock {\em {NCGB: Noncommutative Gr\"{o}bner Bases}}, 1997.
\newblock Available from http://math.ucsd.edu/~ncalg/.

\bibitem{IV12}
{\sc Ito, T., and Vidick, T.}
\newblock {A Multi-prover Interactive Proof for NEXP Sound against Entangled
  Provers}.
\newblock In {\em Proceedings of the 2012 IEEE 53rd Annual Symposium on
  Foundations of Computer Science\/} (Washington, DC, USA, 2012), FOCS '12,
  IEEE Computer Society, pp.~243--252.

\bibitem{KR97}
{\sc Kadison, R.~V., and Ringrose, J.~R.}
\newblock {\em Fundamentals of the Theory of Operator Algebras: Elementary
  Theory}, vol.~1.
\newblock American Mathematical Society, 1997.

\bibitem{KRT10}
{\sc Kempe, J., Regev, O., and Toner, B.}
\newblock Unique games with entangled provers are easy.
\newblock {\em SIAM Journal on Computing 39}, 7 (2010), 3207--3229.

\bibitem{KV11}
{\sc Kempe, J., and Vidick, T.}
\newblock Parallel repetition of entangled games.
\newblock In {\em Proceedings of the 43rd annual ACM symposium on Theory of
  computing\/} (2011), STOC '11, ACM, pp.~353--362.

\bibitem{Kir76}
{\sc Kirillov, A.~A.}
\newblock {\em Elements of the Theory of Representations}.
\newblock Springer-Verlag, 1976.

\bibitem{KSV02}
{\sc Kitaev, A.~Y., Shen, A., and Vyalyi, M.~N.}
\newblock {\em {Classical and Quantum Computation}}.
\newblock Graduate studies in mathematics. American Mathematical Society, 2002.

\bibitem{KS67}
{\sc Kochen, S.~B., and Specker, E.}
\newblock The problem of hidden variables in quantum mechanics.
\newblock {\em Journal of Mathematics and Mechanics 17\/} (1967), 59--87.

\bibitem{MSS13}
{\sc Mancinska, L., Scarpa, G., and Severini, S.}
\newblock {New Separations in Zero-Error Channel Capacity Through Projective
  Kochen--Specker Sets and Quantum Coloring}.
\newblock {\em Information Theory, IEEE Transactions on 59}, 6 (2013),
  4025--4032.

\bibitem{Mer90}
{\sc Mermin, N.~D.}
\newblock Simple unified form for the major no-hidden-variables theorems.
\newblock {\em Physical Review Letters 65\/} (1990), 3373--3376.

\bibitem{Mer93}
{\sc Mermin, N.~D.}
\newblock {Hidden variables and the two theorems of John Bell}.
\newblock {\em Reviews of Modern Physics 65\/} (1993), 803--815.

\bibitem{Mor94}
{\sc Mora, T.}
\newblock {An introduction to commutative and noncommutative Gr\"{o}bner
  bases}.
\newblock {\em Theoretical Computer Science 134}, 1 (1994), 131--173.

\bibitem{NC00}
{\sc Nielsen, M.~A., and Chuang, I.~L.}
\newblock {\em Quantum Computation and Quantum Information}.
\newblock Cambridge University Press, 2000.

\bibitem{Pap93}
{\sc Papadimitriou, C.~H.}
\newblock {\em {Computational Complexity}}.
\newblock Addison-Wesley, 1993.

\bibitem{Per90}
{\sc Peres, A.}
\newblock {Incompatible results of quantum measurements}.
\newblock {\em Physics Letters A 151}, 3--4 (1990), 107--108.

\bibitem{Per02}
{\sc Peres, A.}
\newblock {\em Quantum Theory: Concepts and Methods}.
\newblock Springer Netherlands, 2002.

\bibitem{Raz98}
{\sc Raz, R.}
\newblock A parallel repetition theorem.
\newblock {\em SIAM Journal on Computing 27}, 3 (1998), 763--803.

\bibitem{Reg11}
{\sc Regev, O.}
\newblock Bell violations through independent bases games.
\newblock arXiv:1101.0576, 2011.

\bibitem{RW04}
{\sc Renner, R., and Wolf, S.}
\newblock {Quantum pseudo-telepathy and the Kochen-Specker theorem}.
\newblock In {\em IEEE International Symposium on Information Theory\/} (2004),
  p.~322.

\bibitem{SS12}
{\sc Scarpa, G., and Severini, S.}
\newblock {Kochen--Specker Sets and the Rank-1 Quantum Chromatic Number}.
\newblock {\em Information Theory, IEEE Transactions on 58}, 4 (2012),
  2524--2529.

\bibitem{Sch78}
{\sc Schaefer, T.~J.}
\newblock The complexity of satisfiability problems.
\newblock In {\em Proceedings of the tenth annual ACM symposium on Theory of
  computing\/} (New York, NY, USA, 1978), STOC '78, ACM, pp.~216--226.

\bibitem{Slo11}
{\sc Slofstra, W.}
\newblock {Lower bounds on the entanglement needed to play XOR non-local
  games}.
\newblock {\em Journal of Mathematical Physics 52\/} (2011), 102202.

\bibitem{GAP4}
{\sc {The GAP~Group}}.
\newblock {\em {GAP -- Groups, Algorithms, and Programming, Version 4.6.5}},
  2013.

\bibitem{Tsi87}
{\sc Tsirel'son, B.~S.}
\newblock {Quantum analogues of the Bell inequalities. The case of two
  spatially separated domains}.
\newblock {\em Journal of Soviet Mathematics 36}, 4 (1987), 557--570.

\bibitem{Vid13}
{\sc Vidick, T.}
\newblock {Three-player entangled XOR games are NP-hard to approximate}.
\newblock arXiv:1302.1242, 2013.

\end{thebibliography}


\end{document}